\title{Preference aggregation theory without acyclicity: 
The core without majority dissatisfaction\thanks{
Preprint, Games and Economic Behavior (2011) 72:187-201,\protect\\
\href{http://dx.doi.org/10.1016/j.geb.2010.06.008}{doi:10.1016/j.geb.2010.06.008}.
}}
\author{Masahiro Kumabe \\
Faculty of Liberal Arts, The Open University of Japan\\
2-11 Wakaba, Mihama-ku, Chiba City, 261-8586 Japan
\and 
H.  Reiju Mihara\thanks{Corresponding author. \protect\\
\emph{URL:} \url{http://econpapers.repec.org/RAS/pmi193.htm} (H.R. Mihara).}\\
Kagawa University Library\\
Takamatsu 760-8525, Japan}
\date{July 2010}
\newcommand{\B}{\mathcal{B}}
\newcommand{\W}{\mathcal{W}}
\newcommand{\Y}{\mathcal{Y}}
\newcommand{\Z}{\mathcal{Z}}
\newcommand{\A}{\mathcal{A}}
\newcommand{\M}{\mathcal{M}}
\newcommand{\p}{\mathbf{p}}
\newcommand{\pref}{\succ_i^\p}
\newcommand{\profile}{\mbox{$(\pref)_{i\in N}$}}
\newcommand{\dom}{\succ^\p_\W}
\newcommand{\xprefy}{\{\,i: {x\pref y}\,\}}
\newcommand{\yprefx}{\{\,i: {y\pref x}\,\}}
\newcommand{\xnotmax}{ \{\,i: x\notin\max_B\pref \,\}}
\newcommand{\profs}{\M(B)^{N}_\B}
\newcommand{\acyclicprofs}{\A^N_\B}
\newcommand{\toto}{\to\!\to}
\newcommand{\qed}{\enspace\enspace \vrule height 6pt width5pt
depth2pt}
\newtheorem{theorem}{Theorem}
\newtheorem{prop}{Proposition}
\newtheorem{cor}{Corollary}
\newtheorem{lemma}{Lemma}
\newtheorem{definition}{Definition}
\newtheorem{example}{Example}
\newtheorem{remark}{Remark}
\newenvironment{proof}{\emph{Proof}.}{\qed\bigskip}
\begin{document} 

\maketitle 


\begin{abstract} 
Acyclicity of individual preferences is a minimal assumption in social choice theory. 
We replace that assumption by the direct assumption that 
preferences have maximal elements on a fixed agenda.
We show that the core of a simple game is nonempty 
for all profiles of such preferences 
if and only if the number of alternatives in the agenda is less than 
the Nakamura number of the game.
The same is true if we replace the core by the \emph{core without majority dissatisfaction},
 obtained by deleting from the agenda 
all the alternatives that are non-maximal for all players in a winning coalition.
Unlike the core, the core without majority dissatisfaction depends only on
the players' sets of maximal elements and is included in the union of such sets.
A result for an extended framework gives another sense in which 
the core without majority dissatisfaction behaves better than the core.

\emph{Journal of Economic Literature} Classifications:  C71, D71, C02.

\emph{Keywords:}  
Core, Nakamura number, kappa number, simple games, voting games, maximal elements, acyclic preferences,
limit ordinals.
\end{abstract}

\pagebreak

\section{Introduction}

\subsection{Preference aggregation theory for acyclic individual preferences}

\emph{Preference aggregation theory} is concerned with aggregating individual preferences
into a (collective) social preference, which is then maximized to yield a set of best alternatives.
The theory investigates the extent to which 
social preferences inherit desirable properties from individual preferences.
We typically restrict  (strict) individual and social preferences to
 those asymmetric relations $\succ$ on a set~$X$
of alternatives that are either (i)~acyclic or (ii)~transitive or (iii)~negatively transitive.\footnote{
Define the weak preference~$\succeq$ by 
$x\succeq y \Leftrightarrow y\not\succ x$.
$\succ$ is asymmetric iff $\succeq$ is complete (reflexive and total).
(i)~$\succ$ is \emph{acyclic} if for any finite set $\{x_1, x_2, \ldots, x_m\}\subseteq X$,
whenever $x_1 \succ x_2$, \ldots, $x_{m-1} \succ x_m$, we have $x_m \not\succ x_1$.
If $\succ$ is acyclic, it is asymmetric and irreflexive.
(ii)~$\succ$ is \emph{transitive} if $x\succ y$ and $y\succ z$ imply $x\succ z$.
When $\succ$ is transitive, we say $\succeq$ is \emph{quasi-transitive}.
(iii)~$\succ$ is \emph{negatively transitive} if $x\not\succ y$ and $y\not\succ z$ imply
$x\not\succ z$.
$\succ$ is negatively transitive iff $\succeq$ is transitive.} %

Of the properties (i), (ii), and (iii) for asymmetric preferences,
\emph{negative transitivity} is the most demanding.
Arrow's Theorem (\citeyear{arrow63}) points out the difficulty of aggregating 
preferences for more than two alternatives 
while preserving asymmetry and negative transitivity.\footnote{
The restriction to two alternatives disappears when there are infinitely many individuals \citep{fishburn70},
but such a resolution relies on highly nonconstructive mathematical objects \citep{mihara97et,mihara99jme}.} %

\emph{Acyclicity} is the least demanding of the properties;
it is necessary and sufficient 
for the existence of a maximal element on \emph{every finite subset} of alternatives.
The \emph{Nakamura number} plays a critical role in the study of 
preference aggregation rules with \emph{acyclic} social preferences.\footnote{
 \citet{banks95}, \citet{truchon95}, \citet{andjiga-m00}, and 
 \citet{kumabe-m08jme} are recent contributions to the literature.
 Earlier papers on acyclic rules can be found in \citet{truchon95} and \citet{austensmith-b99}.
 \citet{kumabe-m08scw} comprehensively study the restrictions that various properties for a simple game 
 impose on its Nakamura number.} %
Consider a \emph{simple game} (voting game) $\W$, a collection of ``winning'' coalitions in a set~$N$ of players.
Combining the game with a set $X$ of alternatives and a profile $\p=\profile$ of individual preferences, 
one obtains a \emph{simple game with preferences} $(\W,X,\p)$, for which one can define 
the social preference~$\dom$ (dominance relation) and
 the \emph{core} $C(\W,X,\p)$ (the set of maximal alternatives with respect to $\dom$).
Nakamura's theorem (\citeyear{nakamura79}) gives a restriction
on the number of alternatives that the set of players can deal with rationally (Theorem~\ref{nakamura-thm}):
the core of a simple game with preferences is always (i.e., for all profiles of acyclic preferences) nonempty
 if and only if
the number of alternatives is finite and below a certain number,
 called the \emph{Nakamura number} of the simple game.
 The theorem thus gives a condition (Corollary~\ref{nakamura-cor}) for the social preferences
 to inherit acyclicity from individual preferences.
 
To deal with an empty core (or cycles in social preferences), 
several authors have investigated solutions different from the core.
We pick two examples for which there have been recent developments.
First, \citet{duggan07} proposes a procedure in which one deletes
particular instances of preferences until the resulting subrelation is acyclic (alternatively, transitive or negatively
transitive), and collects the maximal elements of all such (maximal acyclic) subrelations.\footnote{%
A related procedure is to collect the maximal elements of all maximal \emph{chains} (subsets of alternatives 
on which the majority preference is a linear order), 
which yields the Banks set \citep{banks85,penn06scw27p531}; the set
consists of the sophisticated voting outcomes of some binary agenda.} %
Second, taking voters' foresight into consideration, 
\citet{rubinstein80} proposes the \emph{stability set}, a superset of the core.\footnote{%
\citet{lebreton-s90} provide a sufficient condition for the general nonemptyness of the stability set
in terms of the Nakamura number.
Using more complex characteristic numbers, \citet{martin-m06} propose a weaker sufficient condition
and \citet{andjiga-m06} a necessary and sufficient condition for 
the general nonemptyness of the stability set.}
All these investigations focus on treating cyclic \emph{social} preferences,
assuming that \emph{individual} preferences satisfy a rationality property at least as strong as acyclicity.

\subsection{Preference aggregation theory without acyclicity}
 
In this paper, we propose a \emph{preference aggregation theory without acyclicity}.
In contrast to the authors cited in the preceding paragraph,
we do not attempt to remove cycles in (social and even individual) preferences.
 
We retain the usual framework \citep[e.g.,][Section II.2]{arrow63}
 in which a set $X$ of underlying alternatives is distinguished from an
  \emph{agenda} (opportunity set) $B\subseteq X$ with which a group~$N$ of players are confronted.
In particular, 
fixing a simple game~$\W$ and a set $X$, we focus on
the aggregation methods that assign the core $C(\W,B,\p)$ or
the \emph{core without majority dissatisfaction} $C^+(\W,B,\p)$ (introduced later)\footnote{
As the notation suggests, we define the \emph{core (without majority dissatisfaction)} relative to an agenda~$B$.}
to each $(\W,B,\p)$ satisfying a certain assumption.
The assumption, which is actually a condition concerning a pair $(B,\p)$, is that 
\emph{every player's preference $\pref$ has a maximal element in~$B$}: 
the \emph{maximal set} $\max_B\pref$ (the set of maximal elements of the preference)
 is nonempty for each~$i$.\footnote{
Acyclicity of a preference is independent of the property of having
a maximal element.  There is a cyclic preference that has a maximal element on~$X$.
When $X$ is infinite, there is an acyclic preference that has no maximal element on~$X$.
Similarly, the property of having a maximal element on~$B$ is independent of the property of
having a maximal element on~$X$.} %
This is a rather direct assumption at the individual level, 
which is to be inherited to the requirement at the social level that 
there is a chosen alternative for the given pair.

The assumption is distinctive in that it involves an agenda~$B$ as well as a profile~$\p$.
For this reason, it does not fit the following \emph{Standard Scenario}
in social choice theory \citep[page~12]{arrow63} very well:
before knowing an agenda, the players discover and report their own preferences (on~$X$) to the planner;
having obtained a choice rule that assigns a nonempty subset to every agenda,
the planner applies the rule to a particular agenda~$B$.
Since the planner does not generally know whether a pair $(B,\p)$ satisfies the assumption
 until she faces the agenda~$B$,
what she obtains immediately after learning the profile~$\p$ of preferences is 
only a \emph{partial} choice rule, which might assign an empty set to some agenda.\footnote{
	This is not to say that \emph{partial} rules are uninteresting as an object of study.
	Computability theory \citep[e.g.,][]{odifreddi92}, for example, is powerful precisely 
	because it includes partial functions in its scope.} %

An \emph{Alternative Scenario} that the assumption fits well is the following:
the planner first presents an agenda~$B$ to the players, who then discover and report their own preferences 
(for alternatives in~$B$ at least) to the planner; the planner then makes a choice.
This is perhaps a closer description of actual collective decisions.
While failing to produce a choice rule at an intermediate stage,
the scenario has some advantages over the Standard Scenario.

First, the Alternative Scenario can deal with context-dependent choices based on multiple \emph{rationales}
(preferences belonging to the same individual) more easily,
where the context is given by an agenda \citep{kalai-rs02,ambrus-r0807}.
The problem with the Standard Scenario is that a player is supposed to report a single preference
 for the whole set~$X$, when she might actually have an irreducible set of multiple rationales.
 
Second, as \citet[page~110]{arrow63} writes, ``ideally, one could observe all preferences among
the available alternatives, but there would be no way to observe preferences among alternatives
not feasible for society,'' even if each player has a single preference.
This argument justifies the Alternative Scenario either directly or via
\emph{Arrow's IIA} (Independence of Irrelevant Alternatives), 
which requires that social choice from an agenda
depends only on the individual preferences restricted to the agenda.
Both the core and the core without majority dissatisfaction satisfy Arrow's IIA.\footnote{
To be precise, whenever
 $\pref\cap (B\times B)=\, \succ_i^{\p'}\cap (B\times B)$ for all $i$,
 we have $C(\W,B,\p)=C(\W,B,\p')$ and $C^+(\W,B,\p)=C^+(\W,B,\p')$.
The assertion for $C^+$ is a corollary of Lemma~\ref{c+maximal}.
The reader should not confuse Arrow's IIA with the IIA (sometimes called \emph{property~$\alpha$}) 
for choice rules, discussed by \citet{kalai-rs02}, for example.} %
\footnote{
Because of Arrow's IIA, it does not really matter whether we define preferences on~$X$ or on~$B$.
If a preference were defined on an agenda, however, a more straightforward formulation 
would be to remove the symbol~$B$ from the framework and call that agenda~$X$.
Doing so, however, would make the connection between the Alternative Scenario and the framework much less clear.
For this reason, we \emph{formally} define preferences on~$X$ instead of on~$B$.}

\subsection{The core without majority dissatisfaction}

The \emph{core without majority dissatisfaction} (Definition~\ref{c+})
is obtained by deleting from an agenda all the alternatives that are non-maximal for 
all \emph{individual} players in a large (winning) coalition.
It is (Lemma~\ref{c+in_c}) a strengthening or subset of the \emph{core},
obtained by deleting from the agenda all the alternatives that are non-maximal for a large (winning) 
coalition of players \emph{collectively}.
Consequently, it only chooses Pareto efficient alternatives from an agenda,
unlike other solutions such as the stability set \citep[page~153]{rubinstein80}.

The core without majority dissatisfaction is a simple solution concept
 that treats the maximal sets $\max_B \pref$ 
of the players in a better-behaved way than the core does.
(It is reasonable to pay attention to such sets, since they are the very objects that 
we assume to be nonempty.)
First, unlike the core, the core without majority dissatisfaction
 depends only on the players' maximal sets (Lemma~\ref{c+maximal}).\footnote{%
This property is sometimes called ``tops-only''; it is investigated in an abstract social choice
framework by \citet{mihara00scw}, for example.}
Second, each alternative in the core without majority dissatisfaction belongs to
someone's maximal set (Lemma~\ref{c+in_maximals}).
The same is not true for the core (Examples~\ref{ex:core1} and \ref{ex:core2}).

\subsection{Overview of the results}

The main results of the paper are similar to Nakamura's theorem (\citeyear{nakamura79}),
except that they consider \emph{profiles for} an agenda~$B$---profiles of (not necessarily acyclic)
preferences that have maximal elements on the agenda.

\emph{The first main result}, Theorem~\ref{nakamura-max}, is about the original Nakamura number
for simple games $\W$ defined on an algebra of coalitions.
It asserts that the following statements are equivalent:
(i)~the number of alternatives in the agenda~$B$ is less than the Nakamura number~$\nu(\W)$;
(ii)~the core $C^+(\W,B,\p)$ without majority dissatisfaction is nonempty for all profiles $\p$ for the agenda;
(iii)~the core $C(\W,B,\p)$ is nonempty for all profiles $\p$ for the agenda.
Regardless of which choice rule is used, the Nakamura number therefore measures
the extent (the size of an agenda) to which simple games will assuredly choose some alternative 
from the agenda,
whether individual preferences are assumed to be acyclic or to have maximal elements.

Theorem~\ref{nakamura-max} is remarkable for the following reasons:
First, it demonstrates that one can obtain a significant result without assuming acyclic preferences.
Though neglected in the literature, a preference aggregation theory without acyclicity has potential.
Second, the \emph{general} nonemptyness of the core implies the \emph{general} nonemptyness of the 
\emph{strengthening} of the core.  
The core without majority dissatisfaction is as satisfactory as the core according to this criterion.
Third, restricting preferences to those with maximal elements allows us to drop the awkward
condition that the agenda is finite.
Unlike Theorem~\ref{nakamura-thm}, Theorem~\ref{nakamura-max} gives 
a condition for the general nonemptyness of the core for \emph{infinite}, as well as finite, agenda.
Fourth, it fits the Alternative Scenario.  It gives a condition for the planner to be assured of
the existence of a chosen alternative as soon as she presents an agenda to the players
(i.e., before she learns their preferences, supposing that they have maximal elements).
Fifth,  our framework is very general.  Like \citet{nakamura79},
we impose no conditions such as monotonicity or properness on simple games.
Unlike  \citet{nakamura79}, we consider arbitrary sets of players and 
\emph{arbitrary algebras} of coalitions.\footnote{
Most works in this literature consider finite sets of players.
\citet{nakamura79} considers arbitrary (possibly infinite) sets of players and the algebra of all subsets of players.}

\emph{The second main result}, Theorem~\ref{nakamura-max-extended}, is about the
 \emph{kappa number} (Definition~\ref{kappa}), an extension of the Nakamura number
 to the even more general framework that distinguishes 
 the collection~$\B'$ of the sets of players for which one can assign winning/losing status
 from the algebra $\B$ of (identifiable) coalitions.
The kappa number $\kappa(\W')$ is defined for a collection $\W'\subseteq \B'$ of winning sets.
The result asserts that  the following two statements are equivalent:
(i)~the number of alternatives in the agenda~$B$ is less than the kappa number~$\kappa(\W')$;
(ii)~the core $C^+(\W',B,\p)$ without majority dissatisfaction is nonempty for all profiles $\p$ for the agenda.
It also asserts that the above two statements imply, but are not implied by:
(iii)~the core $C(\W',B,\p)$ is nonempty for all profiles $\p$ for the agenda.

Theorem~\ref{nakamura-max-extended} gives another sense in which
the core without majority dissatisfaction behaves better than the core.
Computing the kappa number is not an easy task in general.
So, in applying the theorem, Lemma~\ref{nu-kappa-nu} is useful;
it estimates the kappa number from above and below in terms of the Nakamura numbers.

\section{Preliminaries}

\subsection{Some facts about ordinal numbers}

The notion of \emph{ordinal numbers} (or \emph{ordinals}) generalizes that of natural numbers.
This section presents some facts about ordinals.\footnote{
The paper does not require much knowledge of the theory of ordinal numbers.
Understanding a few notions (such as \emph{limit ordinals} and \emph{cardinal numbers}) and facts
will suffice to understand details of the paper (mostly in footnotes).
A deeper application to economic theory can be found in \citet{lipman91}, 
who uses this theory to find a fixed point of an ``infinite regress'' that a modeler faces.}
Consult a textbook for axiomatic set theory (e.g., \citet{hrbacek-j84})
 for more systematic treatment.

We start by introducing the first ``few'' ordinals.
The natural numbers are constructed from $\emptyset$ as follows:
$0=\emptyset$, 
$1=0\cup \{0\}=\{0\}=\{\emptyset\}$,
$2=1\cup \{1\}=\{0,1\}=\{\emptyset,\{\emptyset\}\}$,
$3=2\cup\{2\}=\{0,1,2\}$, 
$4=3\cup\{3\}=\{0,1,2,3\}$, etc.
The first ordinal number that is not a natural number is the set 
$\omega=\{0,1,2,3,\ldots\}$ of natural numbers.
We can continue the process to obtain
$\omega+1=\omega\cup \{\omega\}=\{0,1,2,\ldots,\omega\}$,
$\omega+2=(\omega+1)+1=(\omega+1)\cup \{\omega+1\}=\{0,1,2,\ldots,\omega,\omega+1\}$, etc.
We then have
$\omega\cdot 2=\omega+\omega=\{0,1,2,\ldots,\omega,\omega+1,\omega+2,\ldots\}$,
$\omega\cdot 2+1$, \ldots, $\omega\cdot 3$, \ldots,
 $\omega\cdot \omega=\{0,1,2,\ldots,\omega,\omega+1,\ldots,\omega\cdot 2,\omega\cdot 2+1,\ldots,
 \omega\cdot 3,\ldots,\omega\cdot 4,\ldots\}$.

For a given ordinal $\alpha$, its \emph{successor} $\alpha+1=\alpha\cup \{\alpha\}$ always exists and is an ordinal.
For a given ordinal $\alpha$,  $\alpha-1$ does not necessarily exist: if there is an ordinal $\beta$ such that 
$\alpha=\beta+1$, then $\alpha$ is a \emph{successor ordinal}; otherwise, it is a \emph{limit ordinal}.
Every natural number except~$0$ is a successor ordinal.
Both $\omega$ and $\omega+\omega$ are limit ordinals.
But $\omega+1$, $\omega+2$, etc.\ are successor ordinals.

Define $\alpha<\beta$ if and only if $\alpha\in\beta$.
$<$ has all the properties of a linear order.
Every set~$A$ of ordinal numbers are \emph{well-ordered} by~$<$,
that is, every nonempty subset of $A$ has a $<$-least element.

Each ordinal $\alpha$ has the property that 
\[
\alpha=\{ \beta: \textrm{$\beta$ is an ordinal and $\beta<\alpha$} \}.
\]
If $\alpha$ is a successor ordinal, say $\beta+1$, then as a set, it has a greatest element, namely~$\beta$.
If $\alpha$ is a limit ordinal, then it does not have a greatest element, and $\alpha=\sup \{\beta: \beta<\alpha \}$.

A function whose domain is an ordinal $\alpha$ is called a \emph{(transfinite) sequence} of length $\alpha$.

Two sets $Y$ and $Y'$ are \emph{equipotent} if there is a bijection (one-to-one and onto function) 
from $Y$ to $Y'$.
An ordinal number $\alpha$ is an \emph{initial} ordinal if it is not equipotent to any $\beta< \alpha$.
For example, $\omega$ is an initial ordinal, because it is not equipotent to any natural number.
$\omega+1$ not initial, because it is equipotent to $\omega$.  Similarly, none of countable ordinals 
$\omega+2$, $\omega+3$, $\omega+\omega$, $\omega\cdot\omega$, $\omega^\omega$, \ldots is initial.

The \emph{cardinal number} of a set $Y$, denoted $\#Y$, is the unique initial ordinal equipotent to $Y$.
In particular, if $Y$ is countable, then $\#Y=\omega$.
There are arbitrarily large cardinal numbers.
Infinite cardinal numbers form a transfinite sequence of \emph{alephs} $\aleph_\alpha$, with $\alpha$ ranging
over all ordinal numbers.
We have $\aleph_\alpha+\aleph_\beta=\aleph_\alpha\cdot \aleph_\beta=\max\{\aleph_\alpha,\aleph_\beta\}$.
Appendix~\ref{card-lim} gives a proof of the following:

\begin{lemma} \label{lem:card-lim}
An infinite cardinal number is a limit ordinal.
\end{lemma}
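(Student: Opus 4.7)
The plan is to argue by contradiction: suppose $\kappa$ is an infinite cardinal (hence an initial ordinal with $\kappa \geq \omega$) that is \emph{not} a limit ordinal. Then $\kappa$ is a successor ordinal, so we can write $\kappa = \alpha + 1$ for some ordinal $\alpha$. The goal is to exhibit a bijection between $\kappa$ and $\alpha$, which contradicts the fact that $\kappa$ is initial (since $\alpha < \kappa$).

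First I would observe that $\alpha$ must be infinite. Indeed, if $\alpha$ were a natural number $n$, then $\kappa = n+1$ would also be a natural number, contradicting $\kappa \geq \omega$. So $\omega \leq \alpha$, which means both $\alpha$ and $\alpha+1$ contain all the natural numbers as a proper initial segment, together with the same ``tail'' of ordinals $\beta$ with $\omega \leq \beta < \alpha$ (and, in the case of $\alpha+1$, the extra element $\alpha$ itself).

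Next I would define the required bijection $f: \alpha + 1 \to \alpha$ explicitly by
\[
f(\alpha) = 0, \qquad f(n) = n+1 \text{ for } n < \omega, \qquad f(\beta) = \beta \text{ for } \omega \le \beta < \alpha.
\]
A short check shows $f$ is well-defined (since $\alpha+1 = \alpha \cup \{\alpha\}$ and these three cases partition its elements), injective (the three cases produce values in $\{0\}$, $\{1,2,3,\ldots\}$, and $\{\beta : \omega \le \beta < \alpha\}$, which are disjoint), and surjective onto $\alpha = \{0,1,2,\ldots\} \cup \{\beta : \omega \le \beta < \alpha\}$. Hence $\alpha + 1$ is equipotent to $\alpha$, contradicting the initiality of $\kappa = \alpha + 1$.

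The only potentially subtle step is confirming that the definition of $f$ really exhausts $\alpha + 1$ and $\alpha$; this just amounts to noting that every ordinal below a limit-or-successor ordinal is either a natural number or lies in the interval $[\omega, \alpha)$, together with keeping track of the extra point $\alpha \in \alpha+1$. Everything else is routine manipulation of the definitions of ``successor ordinal,'' ``initial ordinal,'' and ``cardinal number'' recalled in the preceding paragraphs, so I do not expect any real obstacle.
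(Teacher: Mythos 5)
Your proposal is correct and is essentially identical to the paper's own proof: both argue by contradiction that a successor cardinal $\alpha+1$ would be equipotent to $\alpha$ via the same ``shift the naturals, send the top element to $0$, fix the rest'' bijection. The only difference is that you explicitly verify $\alpha$ is infinite, a point the paper leaves implicit.
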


Without defining the \emph{ordinal sum} and the \emph{cardinal sum} here,
 let us just mention the following useful lemma (proved in Appendix~\ref{sumsum}):
 
\begin{lemma} \label{lem:sumsum}
$\#(\alpha+\beta)=\#\alpha+\#\beta$, where the sum on the left side is the ordinal sum and
the sum on the right is the cardinal sum.
\end{lemma}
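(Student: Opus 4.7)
The plan is to exhibit an explicit bijection between the ordinal sum $\alpha+\beta$, viewed merely as a set, and a disjoint union $\alpha\sqcup\beta$. Once this is done, the equality $\#(\alpha+\beta)=\#\alpha+\#\beta$ is immediate from the very definition of cardinal addition (the cardinality of any disjoint union of representatives).

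The first step is to establish, by a routine transfinite induction on $\beta$, the set-theoretic representation
\[
\alpha+\beta \;=\; \alpha \;\cup\; \{\,\alpha+\gamma : \gamma<\beta\,\},
\]
working from the standard recursive definition of ordinal sum ($\alpha+0=\alpha$; $\alpha+(\delta+1)=(\alpha+\delta)\cup\{\alpha+\delta\}$; and $\alpha+\lambda=\bigcup_{\delta<\lambda}(\alpha+\delta)$ for limit $\lambda$). The base case is trivial, the successor case simply adjoins the one new element $\alpha+\delta$, and the limit case follows by unioning.

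Next I would check that this decomposition is genuinely a \emph{disjoint} union: every $\alpha+\gamma$ with $\gamma<\beta$ is an ordinal $\ge\alpha$, so in particular $\alpha+\gamma\notin\alpha$. I would then observe that the map $\gamma\mapsto\alpha+\gamma$ is a bijection from $\beta$ onto $\{\alpha+\gamma:\gamma<\beta\}$, since $\gamma<\gamma'$ implies $\alpha+\gamma<\alpha+\gamma'$. Putting these two facts together gives a bijection between $\alpha+\beta$ and the disjoint union $\alpha\sqcup\beta$, and the conclusion follows.

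There is no real obstacle in this argument; the only thing to be vigilant about is keeping the two meanings of ``$+$'' distinct and explicitly verifying the disjointness of the union. Both points are more notational than substantive, and the entire proof amounts to unpacking definitions once the representation of $\alpha+\beta$ as $\alpha\cup\{\alpha+\gamma:\gamma<\beta\}$ is in hand.
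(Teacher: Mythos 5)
Your proof is correct and rests on the same idea as the paper's: realizing the ordinal $\alpha+\beta$ as a disjoint union of a copy of $\alpha$ and a copy of $\beta$ and then invoking the definition of the cardinal sum. The only difference is that the paper outsources the key identification to a cited textbook theorem (that $\alpha+\beta$ is order-isomorphic to the concatenation of disjoint copies of $\alpha$ and $\beta$), whereas you prove the equivalent decomposition $\alpha+\beta=\alpha\cup\{\alpha+\gamma:\gamma<\beta\}$ directly by transfinite induction, making the argument self-contained.
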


\subsection{Framework}

Let $N$ be an arbitrary nonempty set of players and 
$\B\subseteq 2^{N}$ an arbitrary Boolean algebra of subsets of~$N$
(so $\B$ includes $N$ and is closed under union, intersection, and complementation).  
The elements of $\B$ are called \textbf{coalitions}.
Intuitively, they are the observable or identifiable or describable subsets of players.
A ($\B$)-\textbf{simple game} $\W$ is a subcollection of~$\B$ such that $\emptyset\notin \W$ and $\W\neq\emptyset$.
The elements of $\W$ are said to be \textbf{winning}, and the other elements in~$\B$ 
are \textbf{losing}.  
A simple game~$\W$ is \textbf{weak} if the intersection $\cap\W=\cap_{S\in\W}S$
of the winning coalitions is nonempty.

Let $X$ be a (finite or infinite) set of \textbf{alternatives}, with cardinal number $\#X\geq 2$.
\emph{In this paper, a \textbf{(strict) preference} is an \emph{asymmetric} relation $\succ$ on $X$}:
if $x\succ y$ (``$x$ is preferred to~$y$''), then $y\not\succ x$.
A relation~$\succ$ is \textbf{total} if $x\neq y$ implies $x\succ y$ or $y\succ x$.
An asymmetric relation is a \textbf{linear order} if it is transitive and total.
A binary relation $\succ$ on $X$ is \emph{acyclic} if for any finite set $\{x_1, x_2, \ldots, x_m\}\subseteq X$,
whenever $x_1 \succ x_2$, \ldots, $x_{m-1} \succ x_m$, we have $x_m \not\succ x_1$.
Acyclic relations are preferences since they are asymmetric (and irreflexive).
Let $\A$ be the set of \emph{acyclic} preferences on~$X$.

A \textbf{profile} is a list $\p=\profile$ of individual preferences~$\pref$.
Intuitively, $x \pref y$ means that player~$i$ prefers $x$ to~$y$ at profile~$\p$.
A profile~$\p$ is \textbf{($\B$)-measurable} if $\xprefy \in\B$ for all~$x$, $y\in X$.  
Denote by $\acyclicprofs$ the set of all measurable profiles of \emph{acyclic} preferences. 

An \textbf{agenda} (or  ``budget set'' or ``opportunity set'') $B$ is a subset of $X$.
Note that a preference, when restricted to the elements in $B$,  defines an asymmetric relation on~$B$.
Let $B\subseteq X$ be an agenda.
An alternative $x\in B$ is said to be a \textbf{maximal} element of $B$ with respect to a binary relation~$\succ$
(written $x\in \max_B \succ$ though $B$ is often dropped)
if there does not exist an alternative $y\in B$ such that $y\succ x$.
Let $\M(B)$ be the set of \textbf{preferences for~$B$}, i.e., 
\emph{asymmetric} relations~$\succ$ on $X$ 
\emph{that has a maximal element of~$B$}.\footnote{We define a preference for $B$ on $X$,
despite the fact that the set of maximal elements of $B$ with respect to $\succ$ depends only on 
the restricted relation $\succ\cap (B\times B)$.}
Let $\profs$ be the set of \textbf{profiles for~$B$}, i.e.,
measurable profiles $\p=\profile \in\M(B)^{N}$ of preferences $\pref$ for $B$.

A \textbf{($\B$)-simple game with (ordinal) preferences} is a list $(\W, B, \p)$ of
a $\B$-simple game~$\W\subseteq\B$, a subset $B$ of alternatives, and
a profile $\p=\profile$.
Given the simple game with preferences, 
we define the (not necessarily asymmetric) \textbf{dominance} relation~$\dom$ on~$X$
 by $x\dom y$ if and only if there is a winning coalition
$S\in\W$ such that $x\pref y$ for all $i\in S$.\footnote{
In this definition, $\xprefy$ need not be winning since we do not assume $\W$ is monotonic.
\citet{andjiga-m00} study Nakamura's theorem,
adopting the notion of dominance that requires the above coalition to be winning.
Their dominance relation distinguishes the game from its monotonic cover,
while the classical dominance~$\dom$ does not.
The two notions are equivalent if and only if the game is monotonic.} %
The \textbf{core} $C(\W, B, \p)$ of the simple game with preferences is 
the set $\max_B\dom$ of undominated alternatives:
\[
C(\W, B, \p)=\{x\in B: \textrm{$\not\!\exists y\in B$ such that $y\dom x$}\}.
\]

An alternative $x\in B$ is \textbf{Pareto in~$B$} if there exists no $y\in B$ 
such that $\yprefx=N$.
It is easy to prove that any alternative in $C(\W, B, \p)$ or in $\bigcup_i\max\pref$ is Pareto in~$B$.

A \textbf{(preference) aggregation rule}  is
a map $\succ \colon \p\mapsto \, \succ^\p$ from profiles~$\p$ of preferences 
to binary relations (\emph{social preferences})~$\succ^\p$ 
on the set $X$ of alternatives.  For example, the mapping $\succ_\W$ 
from profiles $\p\in \acyclicprofs$ of acyclic preferences to dominance relations $\dom$ 
is an aggregation rule.

\subsection{Nakamura's theorem for acyclic preferences}

\citet{nakamura79} gives a condition for a $2^{N}$-simple game with preferences
to have a nonempty core for any profile~$\p$ of \emph{acyclic} preferences.
To state Nakamura's theorem, we define the \textbf{Nakamura number} $\nu(\W)$ of a $\B$-simple game~$\W$
to be the size of the smallest collection of winning coalitions having empty intersection\footnote{The minimum
of the following set of cardinal numbers exists since every set of ordinal numbers has a
$<$-least element.}
\[
\nu(\W)=\min \{\#\W': \textrm{$\W' \subseteq \W $ and $\bigcap\W'=\emptyset$} \}
\]
if $\bigcap\W:=\bigcap_{S\in\W}S=\emptyset$ (i.e., if $\W$ is \emph{nonweak}); 
otherwise, set $\nu(\W)=+\infty$, which is understood to be greater than any cardinal number.
By the assumption that $\emptyset\notin \W$ and $\W\neq\emptyset$, we have $\nu(\W)\ge 2$.
It is easy to prove the following lemma:\footnote{This result can be found in \citet[Lemma~2.1 and Corollary~2.2]{nakamura79}.}

\begin{lemma} \label{nakamura-bounds}
If $\W$ is a nonweak $\B$-simple game, then 
$2\le \nu(\W)\leq \min \{\#S: S\in\W\}+1$ and $\nu(\W)\le \#N$.\end{lemma}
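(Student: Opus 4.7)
The plan is to establish the three inequalities $\nu(\W)\ge 2$, $\nu(\W)\le \min\{\#S:S\in\W\}+1$, and $\nu(\W)\le\#N$ in turn, using only the definition of $\nu$ together with nonweakness and the assumption $\emptyset\notin\W$.

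For the lower bound $\nu(\W)\ge 2$, I would simply observe that for any one-element subcollection $\W'=\{S\}\subseteq\W$, we have $\bigcap \W'=S\ne\emptyset$ since $\emptyset\notin\W$; also $\bigcap\emptyset$ is understood as (a superset of) $N\ne\emptyset$, so no subcollection of cardinality at most $1$ witnesses empty intersection, forcing $\nu(\W)\ge 2$.

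For the upper bound $\nu(\W)\le \min\{\#S:S\in\W\}+1$, I would choose a winning coalition $S_0\in\W$ with $\#S_0=\min\{\#S:S\in\W\}$. Since $\W$ is nonweak, $\bigcap\W=\emptyset$, so for each $i\in S_0$ I can fix some $S_i\in\W$ with $i\notin S_i$. The collection $\W'=\{S_0\}\cup\{S_i:i\in S_0\}\subseteq\W$ has cardinality at most $\#S_0+1$ (in the cardinal sense, absorbing duplicates if any). I would then verify $\bigcap\W'=\emptyset$: any $j\in\bigcap\W'$ would lie in $S_0$, hence $j=i$ for some $i\in S_0$, but $j\notin S_j=S_i$ by choice, a contradiction. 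Thus $\nu(\W)\le\#S_0+1$.

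For the last bound $\nu(\W)\le\#N$, the cleanest route is to repeat the preceding construction over all players: by nonweakness, for each $i\in N$ pick $S_i\in\W$ with $i\notin S_i$, so that $\{S_i:i\in N\}\subseteq\W$ has at most $\#N$ elements and empty intersection. (Alternatively, since $S_0\subseteq N$, one has $\#S_0\le\#N$; when $N$ is infinite, cardinal arithmetic gives $\#S_0+1\le\#N+1=\#N$, and when $N$ is finite the direct construction handles the boundary case $S_0=N$ where the previous bound $\#N+1$ would be too large.) No step is genuinely difficult; the only point demanding attention is keeping cardinal addition correct when collections contain repetitions and when $N$ or $S_0$ may be infinite, so that the ``$+1$'' is interpreted as the cardinal successor rather than an ordinal one.
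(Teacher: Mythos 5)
Your proof is correct and is exactly the standard argument: the paper itself omits the proof, merely noting the lemma is easy and citing Nakamura (1979, Lemma~2.1 and Corollary~2.2), and your constructions (one coalition $S_i$ omitting $i$ for each member $i$ of a minimal winning coalition $S_0$, resp.\ for each player in $N$) are the expected ones. One terminological nit: in your closing remark, ``cardinal successor'' should read ``cardinal sum'' --- the point is that $\#S_0+1=\#S_0$ under cardinal addition when $S_0$ is infinite, not that one passes to the next cardinal.
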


The following theorem extends Nakamura's result~\citep{nakamura79} for $\B=2^{N}$:

\begin{theorem}[\protect \citet{kumabe-m08jme}] \label{nakamura-thm}
Let $\W\subseteq\B$ be a simple game and $B\subseteq X$ an agenda.
Then the core $C(\W, B, \p)$ is nonempty 
for all measurable profiles $\p\in \acyclicprofs$ of \emph{acyclic} preferences
if and only if \emph{$B$ is finite} and $\#B<\nu(\W)$.
\end{theorem}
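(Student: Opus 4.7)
The plan is to extend Nakamura's classical argument to an arbitrary Boolean algebra $\B$ of coalitions and a possibly infinite agenda, with two twists: ensuring $\B$-measurability of any constructed profile, and handling the case where $B$ is infinite (which is no longer subsumed by $\#B\ge\nu(\W)$ when $\W$ is weak and $\nu(\W)=+\infty$).

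For the ``if'' direction, I would assume $B$ is finite with $\#B<\nu(\W)$ and suppose for contradiction that $C(\W,B,\p)=\emptyset$ for some $\p\in\acyclicprofs$. For each $x\in B$ pick $y_x\in B$ and a witnessing winning coalition $S_x\in\W$ with $y_x\pref x$ for all $i\in S_x$. The subfamily $\{S_x:x\in B\}\subseteq\W$ has cardinality at most $\#B<\nu(\W)$, so its intersection is nonempty by the definition of the Nakamura number; fix any $i^*$ in this intersection. Iterating $x\mapsto y_x$ from any $x_0\in B$ produces a sequence in the finite set $B$ whose consecutive pairs are all ranked by player $i^*$'s preference, so eventual repetition yields a cycle in $i^*$'s preference, contradicting acyclicity.

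For the converse I would produce a measurable acyclic profile with empty core in each failing case. If $B$ is finite and $\#B\ge\nu(\W)$, then $\nu(\W)$ is a finite integer $n$; pick $\{S_0,\dots,S_{n-1}\}\subseteq\W$ witnessing $\bigcap_\alpha S_\alpha=\emptyset$, pick distinct $x_0,\dots,x_{n-1}\in B$, and for each player $i$ set $\alpha(i)=\min\{\alpha<n:i\notin S_\alpha\}$. Give player $i$ the linear order $x_{\alpha(i)}\succ x_{\alpha(i)-1}\succ\cdots\succ x_{\alpha(i)+1}$ on $\{x_0,\dots,x_{n-1}\}$ (indices mod $n$), with all remaining elements of $X$ placed below in a single fixed order. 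A direct check shows $\beta\neq\alpha(i)$ forces $x_{\beta+1}\pref x_\beta$, so for every $\beta$ and every $i\in S_\beta$ we get $x_{\beta+1}\dom x_\beta$; any residual $z\in B\setminus\{x_0,\dots,x_{n-1}\}$ is dominated unanimously by $x_0$ via any $S\in\W$. Measurability holds because $\{i:\alpha(i)=\alpha\}=S_0\cap\cdots\cap S_{\alpha-1}\cap(N\setminus S_\alpha)\in\B$ and each $\{i:x_\gamma\pref x_\delta\}$ is a finite union of such sets.

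If $B$ is infinite, I would instead let every player share a common linear order $\succ$ on $X$ having no maximal element on $B$: pick a sequence $b_0,b_1,b_2,\dots$ in $B$, declare $b_{k+1}\succ b_k$ for all $k$, rank the remaining elements of $B$ below $b_0$, and extend arbitrarily to a linear order on $X$. Then each set $\{i:x\pref y\}$ equals $N$ or $\emptyset$, so measurability is automatic; and for each $x\in B$ some $y\in B$ satisfies $y\succ x$, giving $y\dom x$ via any $S\in\W$ and emptying the core. The main subtlety I anticipate is $\B$-measurability of the cyclic construction under an arbitrary $\B\neq 2^N$: this is what forces $\alpha(i)$ to depend on $i$ only through Boolean combinations of the $S_\alpha$. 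The case $\nu(\W)=+\infty$ needs no separate treatment, since it can produce failure only through $B$ being infinite, which is handled uniformly here.
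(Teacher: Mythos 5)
Your proof is correct and follows essentially the same route as the paper: the sufficiency argument (intersect the witnessing coalitions $S_x$, then trace $x\mapsto y_x$ to a cycle in a common member's preference) is exactly the sketch the authors give in the Remark following Theorem~\ref{nakamura-max}, and your finite-case construction via the sets $\{i:\alpha(i)=\alpha\}$ with cyclically shifted linear orders is the same device as the $D_k$-partition in the finite case of Proposition~\ref{nakamura-max2}. (The paper itself only cites this theorem to Kumabe and Mihara (2008a) rather than proving it, but your handling of the infinite-agenda case by a common measurable acyclic preference with no maximal element on $B$ is the standard missing piece and is sound.)
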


Since $\dom$ is acyclic if and only if the set $C(\W,B,\p)$ of maximal elements of~$B$ with
respect to~$\dom$ is nonempty for every finite subset $B$ of $X$, we have:

\begin{cor} \label{nakamura-cor}
The dominance relation~$\dom$ is acyclic for all $\p\in \acyclicprofs$ if and only if 
$\#B<\nu(\W)$ for all \emph{finite} $B\subseteq X$. 
\end{cor}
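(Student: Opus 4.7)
The plan is to derive the corollary directly from Theorem~\ref{nakamura-thm}, using the elementary fact that a binary relation $\succ$ on $X$ is acyclic if and only if $\max_B \succ \neq \emptyset$ for every nonempty finite $B\subseteq X$ (this is precisely the observation made in the sentence immediately preceding the corollary). For one direction of this fact, a cycle $x_1\succ x_2\succ\cdots\succ x_m\succ x_1$ immediately exhibits a nonempty finite set $\{x_1,\ldots,x_m\}$ with no maximal element. For the converse, if a nonempty finite $B$ admits no maximal element, then starting from any $x_0\in B$ one can repeatedly choose $x_{k+1}\in B$ with $x_{k+1}\succ x_k$; since $B$ is finite, some element must repeat in the sequence, yielding a cycle.

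Specialized to the dominance relation, this says: for each individual profile $\p$, the relation $\dom$ is acyclic if and only if $C(\W,B,\p)=\max_B \dom \neq \emptyset$ for every nonempty finite $B\subseteq X$. Since both outer quantifiers are universal, I would then reorder them: the hypothesis ``$\dom$ is acyclic for all $\p\in\acyclicprofs$'' is equivalent to the statement that for every nonempty finite $B\subseteq X$ one has $C(\W,B,\p)\neq\emptyset$ for all $\p\in\acyclicprofs$.

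The final step is to invoke Theorem~\ref{nakamura-thm} for each nonempty finite $B$ individually: the inner statement ``$C(\W,B,\p)\neq\emptyset$ for all $\p\in\acyclicprofs$'' is equivalent to $B$ being finite and $\#B<\nu(\W)$, which on the finite $B$'s we are quantifying over collapses to $\#B<\nu(\W)$. This yields exactly the conclusion. There is essentially no substantive obstacle; the only minor point is that one may harmlessly restrict attention to nonempty $B$, since the case $B=\emptyset$ gives $\#B=0<\nu(\W)$ trivially by Lemma~\ref{nakamura-bounds}.
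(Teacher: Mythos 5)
Your proposal is correct and follows essentially the same route as the paper, which derives the corollary in one sentence from Theorem~\ref{nakamura-thm} together with the observation that $\dom$ is acyclic iff $C(\W,B,\p)=\max_B\dom$ is nonempty for every finite $B\subseteq X$; you merely spell out the details (the proof of that elementary equivalence, the quantifier exchange, and the trivial $B=\emptyset$ case) that the paper leaves implicit.
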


\section{Two notions of the core}

In this section, we first introduce the notion of the \emph{core without majority dissatisfaction},
 a strengthening of the core.
We then compare the two notions of the core, focusing on how they treat 
 the maximal elements of individual preferences.

We consider $\B$-simple games $(\W, B, \p)$ with preferences, given for each profile~$\p$ (not necessarily in $\profs$).
An alternative $x\in B$ is \emph{not} in the core $C(\W, B, \p)$ if $x$ is not maximal with respect to the
dominance relation $\dom$: there are $y\in B$ and a winning coalition $S\in\W$
such that for all~$i\in S$, $y_i=y$ and $y_i\pref x$.
(That is, $x\in B$ is not in $C(\W,B,\p)$ if for some $y\in B$, the set $\{i: y\pref x\}$ contains a winning coalition.)
So, an alternative $x$ (e.g., $d$ in Examples \ref{ex:core1} and \ref{ex:core2} below) can be in the core even if
there is a winning coalition contained in
the set of players $i$ that prefer some $y_i$ to $x$, as long as $y_i$ is different among the
players.  To exclude such an $x$ from the core, we modify the definition:

\begin{definition} \label{c+}
An alternative $x\in B$ is in the \textbf{core $C^+(\W,B,\p)$ without majority dissatisfaction}
if there is no winning coalition $S\in\W$ such that for all $i\in S$, there exists some~$y_i\in B$ satisfying $y_i\pref x$.\footnote{%
To belong to $C^+(\W,B,\p)$, an alternative must be a maximal element for at least
one individual in each $S\in\W$.
So we can rewrite $C^+(\W,B,\p) = \bigcap_{S\in\W} \bigcup_{i\in S} \max_B \pref$.}
In other words, $x\in B$ is in $C^+(\W,B,\p)$ if the set $\{i: x\notin \max_B \pref\}=\{i: y\pref x$ for some $y\in B\}$ 
of players for whom $x$ is non-maximal (players ``dissatisfied with $x$'') contains no winning coalition.\footnote{\label{foot:core_scheme}
The core without majority dissatisfaction consists of those alternatives not rejected 
by the following scheme (when the players behave sincerely):
After proposing an agenda~$B$, for each alternative $x\in B$, 
the planner does the following:
(i)~she proposes~$x$ to the players;
(ii)~she asks each player $i$ whether $i$ is ``dissatisfied'' with~$x$ 
(which, by definition, means whether $i$~finds some alternative in~$B$ better than~$x$),
without asking what alternative $y_i$ is better;
(iii)~if a winning coalition of players are ``dissatisfied'' with~$x$, then the planner rejects~$x$.
Under this scheme, the planner elicits individual preferences in a very incomplete manner
(as is usual with real-world decisions).
Also, the members of the winning coalition are only united in their opposition to~$x$:
they do not have to agree on an alternative~$y$ that should replace~$x$;
they do not even have to know what alternatives $y_j$ the other members $j$~prefer.
See Remark~\ref{remark:core_story} for further discussion.}\end{definition}

\begin{remark} \label{remark:core}
The word ``core'' usually refers to the set of maximal alternatives with respect to some relation.
We adopt the word since 
the core without majority dissatisfaction is indeed the set of maximal elements with respect to the following 
extended dominance relation~$\dom$, which relates a subset~$Y$ of alternatives to 
an alternative~$x$.\footnote{
The extended dominance~$\dom$ can be seen as a dominance relation of 
a \emph{game in constitutional form} \citep{andjiga-m89}, 
which associates a simple game with each pair of subsets of alternatives.
Note, however, that the dominance relations (e.g., \emph{$i$-domination}) they actually analyze, unlike ours,
require each player in a locally winning coalition to prefer \emph{all} the alternatives in $Y$ to~$x$.} %
First, we extend $i$'s preference $\pref\subseteq X\times X$ to a relation $\pref\subseteq 2^X\times X$
 (where $2^X$ is the power set of $X$):
$Y\pref x$ if and only if there is $y\in Y$ such that $y\pref x$.\footnote{
We are following the consequentialist approach of extending preferences on~$X$ to ones on its power set---the set
of opportunity sets.
Unless one is concerned with preferences for flexibility~\citep[e.g.,][]{kreps79} or freedom of choice,
this approach is standard.}
Next, we extend the dominance relation $\dom\subseteq X\times X$ to a relation $\dom\subseteq 2^X\times X$:
$Y\dom x$ if and only if there is $S\in \W$ such that for all $i\in S$, $Y\pref x$.
Then, $x\in C^+(\W,B,\p)$ if and only if there is no $Y\subseteq B$ such that $Y\dom x$
(if and only if $B\not\dom x$).\end{remark}

\begin{remark}\label{remark:core_story}
The core without majority dissatisfaction rejects any alternative (``status quo'') $x$ dominated
by some set~$Y$ of alternatives with respect to the extended dominance relation in Remark~\ref{remark:core}.
According to this dominance relation, a coalition can block~$x$
\emph{without having to agree on a replacement alternative}.
Admittedly, this notion of dominance may lack strong support, especially if 
one sticks to the usual interpretation of alternatives as complete descriptions of social states.
However, when the standard solution (the core) selects too many alternatives,
deleting some of them on a relatively weak ground could normatively be desirable.\footnote{After all, 
the dominance relation that defines the \emph{core} has relatively weak support 
in view of the \emph{stability set} proposed by Rubinstein, 
for example:
a coalition rejects alternatives without taking into consideration that the dominating alternative
may further be rejected.}
The point of the main results is that they provide a condition (Nakamura's inequality)
ensuring that something remains even after alternatives are rejected on weak grounds.

Having said that, we give an example where 
blocking behavior without agreeing on a single alternative~$y$ is plausible.
One such example is a \emph{popularity contest among certain goods}.
Consider the problem of selecting the ``best'' articles published in economic journals in 2010, for instance.
For simplicity, assume that $N=\{1,2,3\}$---the opinions (preferences) of only three experts are elicited.
Since one wants to compare different articles, a natural candidate
for an alternative is an \emph{article}, which is not a social state.
Then the idea in Remark~\ref{remark:core} of a set~$Y$ dominating an article~$x$
should be intuitive enough, since it can be \emph{restated} as follows:
there are a majority, say the coalition $\{1,2\}$, of experts~$i$
and a feasible social state $(y_1, y_2,x)\in Y^3$ that dominates the feasible social state $(x,x,x)$
in the conventional sense:
each $i\in\{1,2\}$ prefers $(y_1, y_2,x)$ to $(x,x,x)$; that is, $1$ prefers $y_1$ to $x$
and $2$ prefers $y_2$ to $x$.
When one can assume that each allocation $(y_1,y_2,y_3)\in Y^3$ is feasible, 
this restatement makes a perfect sense.\footnote{
\emph{If each $y\in Y$ is a disposable private good} that is available in sufficient quantity,
the assumption is satisfied.  
Note that we require a majority despite dealing with a private good, since we are considering a popularity contest.
 \emph{If each $y\in Y$ is a public good}, $(y_1,y_2,y_3)$ generally describes an \emph{infeasible}, imaginary social state
in which each $i$ consumes the public good~$y_i$.
The restatement loses some validity because of the lack of feasibility.
Nevertheless, it does not lose all the validity in our view, 
since preferences are often elicited in a very incomplete manner
 in the real world, as footnote~\ref{foot:core_scheme} suggests.} %
Anyone accepting the conventional dominance relation defining the core
would be ready to accept the extended dominance relation.\end{remark}

The following two lemmas are immediate from the definition.  
The first one says that the core without majority dissatisfaction depends only on the set of maximal elements
of each player.

\begin{lemma} \label{c+maximal}
Let $\p$, $\p'$ be two profiles satisfying $\max_B \pref=\max_B \succ^{\p'}_i$ for all~$i$.
Then $C^+(\W,B,\p)=C^+(\W,B,\p')$.\end{lemma}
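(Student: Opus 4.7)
The plan is to derive the lemma directly from the reformulation of $C^+$ recorded in the footnote to Definition~\ref{c+}, namely
\[
C^+(\W,B,\p) \;=\; \bigcap_{S\in\W}\, \bigcup_{i\in S}\, \max_B \pref .
\]
Once this identity is in hand, the lemma is immediate: the right-hand side depends on the profile $\p$ only through the family of maximal sets $\{\max_B\pref\}_{i\in N}$, so the hypothesis $\max_B\pref = \max_B\succ^{\p'}_i$ for every $i$ gives termwise equality of the two expressions.

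First I would verify the displayed identity, which itself is just an unfolding of the definition. By Definition~\ref{c+}, $x\in C^+(\W,B,\p)$ iff the set $\{i: x\notin \max_B \pref\}$ of dissatisfied players contains no winning coalition, equivalently iff for every $S\in\W$ there is some $i\in S$ with $x\in \max_B \pref$; this last condition is precisely membership in $\bigcap_{S\in\W}\bigcup_{i\in S}\max_B\pref$.

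Then I would conclude as follows. Substitute $\max_B\succ^{\p'}_i$ for $\max_B\pref$ in the formula; by the hypothesis this substitution leaves every term unchanged, so $C^+(\W,B,\p)=C^+(\W,B,\p')$. Alternatively, and just as cleanly, one can argue from the ``dissatisfaction'' formulation directly: the set $\{i: x\notin \max_B\pref\}$ equals $\{i: x\notin \max_B\succ^{\p'}_i\}$ under the hypothesis, so the two profiles yield the same collection of alternatives $x$ for which this set contains no winning coalition.

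There is no real obstacle here; the only thing worth being careful about is not to invoke the dominance relation $\dom$, which in general \emph{does} depend on more than the maximal sets (as Examples~\ref{ex:core1} and~\ref{ex:core2} will illustrate for the ordinary core). The proof must rest on the ``individual'' characterization of $C^+$ rather than on the collective dominance formulation used for $C$.
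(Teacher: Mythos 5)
Your proof is correct and matches the paper's intent: the paper declares this lemma ``immediate from the definition,'' with the footnote to Definition~\ref{c+} supplying exactly the identity $C^+(\W,B,\p)=\bigcap_{S\in\W}\bigcup_{i\in S}\max_B\pref$ on which you rely. Your unfolding of that identity and the termwise substitution argument are both accurate, so nothing further is needed.
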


\begin{lemma} \label{c+in_c}
For each profile $\p$, we have $C^+(\W,B,\p)\subseteq C(\W,B,\p)$.
The inclusion $\subseteq$ is strict for some profile. \end{lemma}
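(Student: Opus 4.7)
The plan has two parts, corresponding to the two claims. For the inclusion $C^+(\W,B,\p)\subseteq C(\W,B,\p)$, I argue by contrapositive directly from the definitions. Suppose $x\in B$ lies outside the core. Then there exist a single $y\in B$ and a winning coalition $S\in\W$ with $y\succ^\p_i x$ for every $i\in S$. Setting $y_i:=y$ for each $i\in S$ produces the family of alternatives required by Definition~\ref{c+} to witness that $x\notin C^+(\W,B,\p)$. The only substantive point is the observation that the defining condition for $C^+$ is strictly weaker than that for $C$: in $C^+$ each $i\in S$ may select her own $y_i$, while for $C$ a single $y$ must dominate $x$ for the entire coalition.

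For the strict inclusion I would exhibit a minimal Condorcet-style example on three players with simple majority. Take $N=\{1,2,3\}$, $\B=2^N$, and $\W=\{S\subseteq N:\#S\ge 2\}$. Let $B=\{a,b,d\}$ and (extending arbitrarily to~$X$) define preferences by $a\succ^\p_1 d\succ^\p_1 b$, $b\succ^\p_2 d\succ^\p_2 a$, and $d\succ^\p_3 a$, $d\succ^\p_3 b$. I verify $d\in C(\W,B,\p)$ by inspecting each candidate dominator in $B$: only player~$1$ prefers $a$ to $d$ and only player~$2$ prefers $b$ to $d$, so neither $\{1\}$ nor $\{2\}$ is winning, and no alternative in $B$ dominates~$d$. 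On the other hand, the winning coalition $\{1,2\}$ witnesses $d\notin C^+(\W,B,\p)$ by choosing $y_1:=a$ and $y_2:=b$.

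Main obstacle: there is essentially none. The first part is an unpacking of definitions, and for the second part the challenge is only to exhibit the separation cleanly; the example above is minimal in both the number of players and the size of the agenda. The conceptual content of the lemma is exactly that passing from a uniform blocking alternative $y$ to a coalitionally varying family $(y_i)_{i\in S}$ is a genuine weakening of the blocking requirement, and this gap already appears in a three-player, three-alternative configuration with majority rule.
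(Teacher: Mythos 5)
Your proposal is correct and follows essentially the same route as the paper: the paper treats the inclusion as immediate from the definitions (exactly your observation that a uniform dominating $y$ yields the family $y_i:=y$), and establishes strictness via three-player majority examples (its Examples~\ref{ex:core1} and~\ref{ex:core2}) of the same Condorcet-style form as yours, in which an alternative $d$ is non-maximal for a winning coalition only via player-dependent alternatives. Your example checks out ($d\in C$ since $\{i:a\succ^\p_i d\}=\{1\}$ and $\{i:b\succ^\p_i d\}=\{2\}$ are losing, while $\{1,2\}$ witnesses $d\notin C^+$), so nothing further is needed.
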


Each of Examples \ref{ex:core1} and \ref{ex:core2} below\footnote{See Appendix~\ref{appendix-figs} for graph representations of the profiles in these examples.}
 shows that the inclusion is sometimes strict.
Example \ref{ex:core1} also shows that \emph{an alternative can be in the core even if it
is not maximal with respect to anyone's preference}:

\begin{example} \label{ex:core1}
Let $N=\{1,2, 3\}$ and let $\W$ consist of the coalitions having a majority (i.e., having at least two players).
Let $X=\{a,b,c,d,e\}$.  Define a profile by 
$\succ_1=\{(a,d),(e,b),(e,c)\}$,
$\succ_2=\{(b,d),(e,a),(e,c)\}$, and
$\succ_3=\{(c,d),(e,a),(e,b)\}$.
Then the sets $\max \succ_i$ of maximal elements of $X$ with respect to $\succ_i$
 are given by
$\max \succ_1=\{a,e\}$,
$\max \succ_2=\{b,e\}$, and
$\max \succ_3=\{c,e\}$.
But the core $C$ is $\{d, e\}$.
So, the core is neither a subset nor a superset of $\bigcup_i \max\succ_i=\{a, b, c, e\}$.
The core $C^+$ without majority dissatisfaction is $\{e\}$, 
which is a subset of $\bigcup_i \max\succ_i$.\end{example}

Example~\ref{ex:core2} also shows that \emph{the core, even if it is nonempty,
does not necessarily intersect the union of the maximal elements of individual preferences}:

\begin{example} \label{ex:core2}
We modify the simplest voting paradox (a cycle involving three alternatives and three players)
 by adding an alternative~$d$.
Let $N=\{1,2, 3\}$ and let $\W$ consist of the coalitions having a majority.
Let $X=\{a,b,c,d\}$.  Define a profile by 
$\succ_1=\{(a,b),(b,c),(a,d)\}$,
$\succ_2=\{(b,c),(c,a),(b,d)\}$, and
$\succ_3=\{(c,a),(a,b),(c,d)\}$.
Then the core $C$ is $\{d\}$.
So it does not intersect $\bigcup_i \max\succ_i=\{a, b, c\}$.
The core $C^+$ without majority dissatisfaction is empty.
\end{example}

Unlike the core, the core without majority dissatisfaction is always included
 in the union of the maximal elements of individual preferences:

\begin{lemma} \label{c+in_maximals}
For each profile $\p$,  we have 
$C^+(\W,B,\p)\subseteq C(\W,B,\p)\cap(\bigcup_i \max_B \pref)$.\footnote{
Appendix~\ref{ex:core3} shows that the inclusion is strict for some profile.}\end{lemma}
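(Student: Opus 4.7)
The plan is to combine Lemma~\ref{c+in_c}, which already gives $C^+(\W,B,\p)\subseteq C(\W,B,\p)$, with a separate argument that $C^+(\W,B,\p)\subseteq \bigcup_i\max_B\pref$. Only the second inclusion needs work, and I would attack it by contrapositive using the alternative description of $C^+$ given in Definition~\ref{c+}.

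Suppose $x\in B$ is in no player's maximal set, i.e., $x\notin \max_B\pref$ for every $i\in N$. Then the ``dissatisfaction set'' $\{\,i:x\notin\max_B\pref\,\}$ equals the whole player set~$N$. Since $\W\neq\emptyset$, pick any $S\in\W$; then $S\subseteq N=\{\,i:x\notin\max_B\pref\,\}$, so $N$ contains a winning coalition. By Definition~\ref{c+}, this means $x\notin C^+(\W,B,\p)$, which is the contrapositive of what we want.

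Combining the two inclusions gives $C^+(\W,B,\p)\subseteq C(\W,B,\p)\cap (\bigcup_i\max_B\pref)$.

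There is really no obstacle here: the only thing to be careful about is that we are not assuming monotonicity or that $N\in\W$, but we also do not need either, since every $S\in\W$ is automatically a subset of $N$, and the condition defining $C^+$ only requires \emph{some} winning $S$ to lie inside the dissatisfaction set. The argument uses nothing beyond $\W\neq\emptyset$, the definition of ``maximal,'' and the definition of $C^+$.
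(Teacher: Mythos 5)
Your proof is correct and follows essentially the same route as the paper's: it invokes Lemma~\ref{c+in_c} for the inclusion in $C(\W,B,\p)$ and then shows by contrapositive that an alternative maximal for no player has dissatisfaction set equal to~$N$, which contains a winning coalition since $\W\neq\emptyset$. No gaps; your remark that neither monotonicity nor $N\in\W$ is needed is exactly the right point of care.
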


\begin{proof}
By Lemma~\ref{c+in_c}, it suffices to show that $C^+$ is a subset of $\bigcup_i \max_B \pref$.
Suppose $x\in B$ but $x\notin \bigcup_i \max_B \pref$.
Then, $x\notin  \max_B \pref$ for any~$i\in N$.
This implies that $\{i: x\notin \max_B \pref\}=N\supseteq S$ for any winning coalition $S\in \W$.
(Such an $S$ exists since $\W\neq\emptyset$.)
By the definition of $C^+$, we have $x\notin C^+$.\end{proof}

\section{Preferences with Maximal Alternatives}

We consider in the rest of the paper profiles
\emph{for} a set~$B$ of alternatives, that is, measurable profiles consisting
of \emph{preferences that have a maximal element of~$B$}.

\subsection{The results for the core of a simple game}

We now give a version of Nakamura's theorem for profiles \emph{for} a set~$B$ of alternatives:

\begin{theorem} \label{nakamura-max}
Let $\W\subseteq\B$ be a simple game and $B\subseteq X$ an agenda.
Let $\profs$ be the set of measurable profiles of preferences having a maximal element of~$B$.
Then the following statements are equivalent:\footnote{
The implication (iii)$\Rightarrow$(ii) is \emph{not} the same as the following statement (Example~\ref{ex:core2}):
for each $\p\in \profs$,
if $C(\W, B, \p)\neq\emptyset$, then $C^+(\W, B, \p)\neq\emptyset$.}
\\
\textup{(i)}~$\#B<\nu(\W)$;\\
\textup{(ii)}~the core $C^+(\W, B, \p)$ without majority dissatisfaction is nonempty for  all $\p\in \profs$;\\
\textup{(iii)}~the core $C(\W, B, \p)$ is nonempty for all $\p\in \profs$.
\end{theorem}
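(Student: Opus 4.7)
The plan is to prove the cycle (ii)~$\Rightarrow$~(iii)~$\Rightarrow$~(i)~$\Rightarrow$~(ii). The first implication is immediate from Lemma~\ref{c+in_c}, which gives $C^+\subseteq C$.

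For (i)~$\Rightarrow$~(ii) I would argue the contrapositive. Suppose $C^+(\W,B,\p)=\emptyset$ for some $\p\in\profs$. Then for each $x\in B$ the set $\{i : x\notin\max_B\pref\}$ contains some $S_x\in\W$. The key observation is that $\bigcap_{x\in B} S_x = \emptyset$: because $\p\in\profs$, every player $i$ has a maximal element $y_i\in\max_B\pref$, and this rules $i$ out of $S_{y_i}$ (no member of $S_{y_i}$ has $y_i$ maximal). Hence $\{S_x:x\in B\}$ is a subfamily of $\W$ of cardinality at most $\#B$ with empty intersection, so $\nu(\W)\le\#B$, contradicting (i).

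The main work is (iii)~$\Rightarrow$~(i), again via the contrapositive. Given $\#B\ge\nu(\W)$, I would fix a family $\{S_\alpha:\alpha<\nu(\W)\}\subseteq\W$ with $\bigcap_\alpha S_\alpha=\emptyset$ and distinct alternatives $\{x_\alpha:\alpha<\nu(\W)\}\subseteq B$, and construct $\p$ by declaring $x_{\alpha+1}\pref x_\alpha$ exactly when $i\in S_\alpha$ (with addition mod~$\nu(\W)$ in the finite case and the ordinary ordinal successor otherwise), together with $x_\alpha\pref y$ for every $y\in B\setminus\{x_\beta:\beta<\nu(\W)\}$ and every $\alpha,i$. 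Asymmetry, measurability, and the existence of a maximal element in~$B$ for each player---any $x_\alpha$ with $i\notin S_\alpha$ works, and such an $\alpha$ exists by empty intersection---are routine to verify. Each $x_\alpha$ is dominated by $x_{\alpha+1}$ through the winning coalition $S_\alpha$, while each outside alternative is unanimously dominated by $x_0$; hence $C(\W,B,\p)=\emptyset$.

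The main obstacle is the case of infinite $\nu(\W)$. A naive ``cyclic'' construction requires a predecessor to pair with each $\alpha$, which fails at limit ordinals. The resolution is Lemma~\ref{lem:card-lim}: since infinite cardinals are limit ordinals, every $\alpha<\nu(\W)$ admits a successor $\alpha+1<\nu(\W)$, so the dominating chain uses only successors and no wrap-around is needed. (In the finite case the only boundary to check is $\nu(\W)=2$, where $S_0\cap S_1=\emptyset$ prevents the two cyclic edges from colliding at any single player.)
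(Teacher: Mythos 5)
Your proposal is correct and follows essentially the same route as the paper: the same cycle of implications, the same counting argument for (i)$\Rightarrow$(ii) via the coalitions $S_x$ of players dissatisfied with $x$, and for (iii)$\Rightarrow$(i) the same construction of a dominance cycle when $\nu(\W)$ is finite and an increasing transfinite successor chain (justified by Lemma~\ref{lem:card-lim}) when $\nu(\W)$ is infinite. The only cosmetic differences---having every $x_\alpha$ rather than just $x_0$ dominate the alternatives outside the chain, and the explicit check of asymmetry when $\nu(\W)=2$---do not change the argument.
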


\begin{proof}
(i)$\Rightarrow$(ii). 
Suppose $C^+(\W, B, \p)=\emptyset$ for some profile $\p$ for~$B$.
By Definition~\ref{c+}, for each $x\in B$, there is a winning coalition $S_x\in\W$
such that $S_x\subseteq \{i: x\notin\max\pref\}$.
We claim that $\bigcap_{x \in B}S_x=\emptyset$.
(Otherwise, there is an $i$ who is in $S_x$ for all $x\in B$.
It follows that $\pref$ has no maximal element of~$B$.)
The claim shows that $\nu(\W)\leq \#B$.

\medskip
(ii)$\Rightarrow$(iii).  Immediate from Lemma~\ref{c+in_c}.

\medskip
(iii)$\Rightarrow$(i). 
Suppose $\# B\geq \nu(\W)$.  We construct a profile $\p$ for~$B$ such that $C(\W, B, \p)=\emptyset$.
Let $\nu=\nu(\W)\ge 2$.

\emph{Step}~1, \emph{Case}~(a): $\nu$ is finite. 
\emph{We construct a profile $\p$ such that the dominance relation~$\dom$ has a cycle} consisting of
$\nu$ alternatives
(and the cycle contains an alternative $x_0$ greater than any alternative $y$ not belonging to the cycle).

By the definition of the Nakamura number, there is a collection 
$\W'=\{L_0, \ldots, L_{\nu-1} \}$ of winning coalitions such that
$\bigcap\W'=\bigcap_{k=0}^{\nu-1} L_k=\emptyset$.

Choose a subset $B'=\{x_0, x_1, \ldots, x_{\nu-1} \} \subseteq B$ and write $x_\nu=x_0$.
Fix a cycle (noting that $(x,y)\in\,\succ$ means $x\succ y$)
\[
\succ\, = \{(x_{k+1},x_k): k\in \{0, \ldots, \nu-1 \}\}
\]
and a relation $\succ'=\{(x_0,y): y\notin B' \}$.
Now, go to Step~2.

\emph{Step}~1, \emph{Case}~(b): $\nu$ is infinite.
\emph{We construct a profile $\p$ such that the dominance relation~$\dom$ defines an increasing
 transfinite sequence} of length $\nu$ 
 (and the sequence contains an alternative $x_0$ greater than any alternative $y$ not belonging to the sequence).
 
Recall that if $\nu$ is an ordinal, then $\nu=\{ \alpha: \textrm{$\alpha$ is an ordinal and $\alpha<\nu$} \}$.
By the definition of the Nakamura number, there is a collection\footnote{\label{note:indexed-collection}
Since $\W'$ is a (well-orderable) set whose cardinal number is $\nu$,
there is a bijection that maps each $\alpha\in\nu$ into an element $L_\alpha$ of $\W'$.
So we can write $\W'=\{L_\alpha: \alpha \in \nu \}$.} %
$\W'=\{L_\alpha: \alpha\in\nu \}$ of winning coalitions $L_\alpha$ 
such that $\bigcap\W'=\bigcap_{\alpha<\nu} L_\alpha=\emptyset$. 

Choose a subset $B'=\{x_\alpha : \alpha\in \nu \} \subseteq B$.
Fix a relation\[
\succ\, =  \{(x_{\alpha+1},x_\alpha): \alpha\in \nu \},
\]
which defines an increasing transfinite sequence of alternatives,\footnote{%
The sequence $(x_\alpha)_{\alpha\in\nu}$ does not end: if $\alpha\in\nu$, then $\alpha+1\in\nu$.
This is because $\nu$, being a Nakamura number, is a cardinal number and 
any infinite cardinal number is a limit ordinal by Lemma~\ref{lem:card-lim}.}
and another relation $\succ'=\{(x_0,y): y\notin B' \}$.

\medskip

\emph{Step}~2.
We define a profile $\p=\profile$ by specifying, for each pair $(x,y)\in X^2$,
 the set $\xprefy$ of players that prefer $x$ to~$y$.
Note that $\succ\cap\succ'=\emptyset$.
 (In the following, the letter $\alpha$ denotes an ordinal number,
including a natural number denoted~$k$ in Case~(a) above.)
If $(x,y)=(x_{\alpha+1},x_\alpha)\in \,\succ$, let $\xprefy=L_\alpha$.
If $(x,y)\in \,\succ'$, let $\xprefy=N$.
If $(x,y)\notin \, \succ\cup \succ'$, let $\xprefy=\emptyset$.
The profile~$\p$ is measurable since $L_\alpha$, $N$, $\emptyset\in\B$.

The profile~$\p$ is for~$B$ (i.e., $\p\in\profs$), since we can show that 
each $i$'s preference $\pref$ has a maximal element of~$B$.
For example, if $i\in L_2\cap L_3\cap L_5$,
but $i\notin L_k$ for $k\notin\{2,3,5\}$, then every alternative in $B'$ except 
$x_2$, $x_3$, $x_5$ is a maximal element of $\pref$.
(More formally, the set of maximal elements of $\pref$ is 
$\{x_\alpha\in B': i \notin L_\alpha \}$, which is
nonempty since $\bigcup L_\alpha^c=N$, where $L_\alpha^c=N\setminus L_\alpha$.)

The dominance relation~$\dom$ is clearly $\succ \cup \succ'$,
 since $L_\alpha$ is winning, $N$ contains a winning coalition, and $\emptyset$ is losing.
It follows that the core $C(\W, B, \p)$ is empty.\end{proof} 

\begin{remark}
It is instructive to compare the proof of the ``$\Leftarrow$'' direction of Theorem~\ref{nakamura-thm}
with a direct proof of (i)$\Rightarrow$(iii), which can be given in a way similar to that of (i)$\Rightarrow$(ii) above.
These proofs go as follows:
Suppose $C^+(\W, B, \p)=\emptyset$ for some profile $\p$.
For ``$\Leftarrow$'' of Theorem~\ref{nakamura-thm}, $\p\in \acyclicprofs$
 is a profile of \emph{acyclic} preferences; 
 for (i)$\Rightarrow$(iii), $\p\in \profs$ is a profile of preferences having a maximal element (of~$B$).
Then,\\
(a)~for each $x\in B$, there are $y\in B$ and a winning coalition $S_x\in\W$
such that $S_x\subseteq \{i: y\pref x\}$ (hence $y\dom x$).\\
Note that $x$ is not a maximal element for the players in~$S_x$.
The desired conclusion  $\nu(\W)\leq \#B$ follows if we show $\bigcap_{x \in B}S_x=\emptyset$.
So suppose there is an $i$ who is in $S_x$ for all $x\in B$.
\begin{itemize}
\item To show ``$\Leftarrow$'' of Theorem~\ref{nakamura-thm}, assume $B$ is finite.
By repeated application of~(a), we find the dominance~$\prec_\W^\p$ contains an infinitely ascending sequence:
$x_0\prec x_1\prec x_2\prec x_3\prec \cdots$.  Since $B$ is finite, the sequence contains a cycle such as 
$x_2\prec x_3\prec x_4\prec x_2\prec x_3\prec \cdots$.
It follows that $i$'s preference contains the same cycle, which violates the assumption.

\item For (i)$\Rightarrow$(iii) of Theorem~\ref{nakamura-max}, a contradiction is immediate:
let $x\in B$ be a maximal element for~$\pref$; then $i\in S_x$ is violated.
We do not even need the assumption that $B$ is finite.
We remark that the possibility remains that $i\in S_x$ if $x\in B$ is not a maximal element for~$i$.
So $i$'s preference may contain a cycle without contradiction; it consists only of non-maximal elements for her.
\end{itemize}
\end{remark}

The following example is an application of Theorem~\ref{nakamura-max}.
It gives a condition for an \emph{infinite} set of alternatives to have an element in the core, that is,
 a maximal element with respect to the dominance relation~$\dom$.

\begin{example}
Let $N=[0,1]$ be the unit interval on the set of real numbers and let
$\B$ consist of all subsets of~$N$.
Define a simple game $\W$ by $S\in\W$ if and only if $S^c$ is countable.
Then, it is easy to show that $\nu(\W)$ is uncountable.
Let $X$ be a countable set of alternatives (e.g., the set of rational numbers in $[0,1]$).
Suppose that for each $i$, her preference $\pref$ has a maximal alternative
(e.g., a utility function representing~$\pref$ has finitely many ``peaks,'' all corresponding to rational numbers).
Then, Theorem~\ref{nakamura-max} implies that $C^+(\W,X,\p)$ and $C(\W,X,\p)$ 
both contain alternatives.\end{example}

The profiles constructed in the proof of (iii)$\Rightarrow$(i) of Theorem~\ref{nakamura-max}
consist of individual preferences that may have more than one maximal alternative.
However, we can modify the proof in such a way that each individual preference has 
\emph{exactly one} maximal alternative.
That gives the following proposition:

\begin{prop} \label{nakamura-max2}
Let $\W\subseteq\B$ be a simple game and $B\subseteq X$ an agenda.
Then the three equivalent statements \textup{(i)}, \textup{(ii)}, and \textup{(iii)} in Theorem~\ref{nakamura-max}
are equivalent to the following statements:\\
\textup{(ii.a)}~The core $C^+(\W, B, \p)$ without majority dissatisfaction is nonempty for all measurable profiles~$\p$ 
of preferences having \emph{exactly one} maximal element of~$B$.\\
\textup{(ii.b)}~The core $C^+(\W, B, \p)$ without majority dissatisfaction is nonempty for all measurable profiles~$\p$ 
of linear orders having a maximal element of~$B$.\\
\textup{(iii.a)}~The core $C(\W, B, \p)$ is nonempty for all measurable profiles~$\p$ of preferences 
having \emph{exactly one} maximal element of~$B$.\\
\textup{(iii.b)}~The core $C(\W, B, \p)$ is nonempty for all measurable profiles~$\p$ of linear orders 
having a maximal element of~$B$.\end{prop}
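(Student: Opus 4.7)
The plan is to use Theorem~\ref{nakamura-max} as the backbone: since (i), (ii), and (iii) there are already equivalent, it suffices to sandwich the new statements in between them. Observe that the profile class for (ii.a), (iii.a) is a subclass of $\profs$, and that for (ii.b), (iii.b) is in turn a subclass (every linear order with a maximum element in~$B$ has exactly one maximal in~$B$). Hence (ii)~$\Rightarrow$~(ii.a)~$\Rightarrow$~(ii.b) and (iii)~$\Rightarrow$~(iii.a)~$\Rightarrow$~(iii.b) are immediate by restricting the universal quantifier, while (ii.a)~$\Rightarrow$~(iii.a) and (ii.b)~$\Rightarrow$~(iii.b) follow from the inclusion $C^+\subseteq C$ of Lemma~\ref{c+in_c}. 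Only one edge remains, and I would close the loop via (iii.b)~$\Rightarrow$~(i).

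I would argue the contrapositive of (iii.b)~$\Rightarrow$~(i): assuming $\#B\geq \nu(\W)$, I will construct a measurable profile $\p'$ of linear orders on~$X$, each having a maximal in~$B$, with $C(\W,B,\p')=\emptyset$. The idea is to strengthen the profile $\p$ built in the (iii)~$\Rightarrow$~(i) step of the proof of Theorem~\ref{nakamura-max}. There, player $i$'s preference $\pref$ has as its maximals in~$B$ the set $\{x_\alpha : \alpha\in Q_i\}$, where $Q_i=\{\alpha : i\notin L_\alpha\}$ is nonempty since $\bigcap L_\alpha=\emptyset$. For each~$i$, I would pick one top $x^*(i)=x_{\alpha^*(i)}$ with $\alpha^*(i)\in Q_i$, augment $\pref$ by the pairs $(x^*(i),z)$ for every $z\in X\setminus\{x^*(i)\}$, and then extend the remaining relation on $X\setminus\{x^*(i)\}$ to a linear order using a fixed well-ordering of~$X$ to resolve incomparabilities canonically. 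Because $x^*(i)$ was already maximal under $\pref$, no $z\pref x^*(i)$, so asymmetry is preserved and $x^*(i)$ becomes the unique maximum. Enlarging each individual preference only enlarges the dominance relation~$\dom$, so the core stays empty.

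The main obstacle is measurability: for each pair $(x,y)\in X^2$, the new ``who-prefers'' set differs from the original by the term $\{i: x^*(i)=x\}$, which must lie in~$\B$. A natural choice is $\alpha^*(i)=\min Q_i$, which gives the class $L_\alpha^c\cap \bigcap_{\beta<\alpha} L_\beta$ for $\{i:\alpha^*(i)=\alpha\}$. When $\nu=\nu(\W)$ is finite (Case~(a) of the proof of Theorem~\ref{nakamura-max}), this is a finite Boolean combination of members of~$\B$, hence in~$\B$, and the tie-breaking via the prefixed well-ordering of~$X$ adds no further measurability burden. When $\nu$ is infinite (Case~(b)), prefix intersections beyond length~$\omega$ need not belong to~$\B$, so a more delicate measurable selection of the tops~$x^*(i)$ is required, possibly by exploiting additional structure of the witnessing family $\{L_\alpha\}$ or by selecting atoms of finite subalgebras generated by finitely many $L_\alpha$'s. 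This measurable-selection step is the main technical content of the proof; once it is in place, the extensions to linear orders by the prefixed well-ordering of~$X$ are measurable automatically, and the empty-core property propagates to the extended profile.
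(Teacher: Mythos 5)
Your reduction of the easy implications is exactly right and matches the paper: (ii)$\Rightarrow$(ii.a)$\Rightarrow$(ii.b) and (iii)$\Rightarrow$(iii.a)$\Rightarrow$(iii.b) by restricting the quantifier, (ii.b)$\Rightarrow$(iii.b) by Lemma~\ref{c+in_c}, and the loop closes through (iii.b)$\Rightarrow$(i). Your finite-$\nu$ construction also works and is essentially the paper's: the sets $\{i:\alpha^*(i)=\alpha\}=L_\alpha^c\cap\bigcap_{\beta<\alpha}L_\beta$ are exactly the paper's $D_\alpha$, they partition $N$ into finitely many coalitions in~$\B$, and giving all players in $D_\alpha$ a common linear order with top $x_\alpha$ makes every set $\xprefy$ a finite union of $D_\alpha$'s.

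The gap is the infinite-$\nu$ case, and it is not a removable technicality of your plan: the strategy of first fixing a measurable assignment $i\mapsto x^*(i)$ of tops and then linearizing cannot work in general, because producing coalitions $C_\alpha\in\B$ with $C_\alpha\subseteq L_\alpha^c$ covering $N$ disjointly requires the transfinite disjointification $L_\alpha^c\cap\bigcap_{\beta<\alpha}L_\beta$, which a mere Boolean algebra need not contain (this is precisely why the paper's own remark says the finite-case argument fails for infinite $\nu$), and your fallback suggestions (extra structure on the $L_\alpha$'s, atoms of finite subalgebras) do not supply a substitute. Moreover, even granting such a selection, the ``canonical tie-break by a fixed well-ordering'' does not obviously keep the pairwise sets $\xprefy$ in~$\B$, since the transitive cascades from each player's forced chain relations differ across players. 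The paper's resolution is different in kind: measurability is only demanded of the \emph{pairwise} comparison sets, never of the top-assignment sets, and the linear order on $B'=\{x_\alpha:\alpha\in\nu\}$ is defined directly so that each pairwise set is a single member of~$\B$. Concretely, player $i$ ranks her ``preferred class'' $\{x_\alpha: i\notin L_\alpha\}$ by increasing index, above her ``less-preferred class'' $\{x_\alpha: i\in L_\alpha\}$ ranked by decreasing index; then $\{i:x_\alpha\pref x_\beta\}$ equals $L_\alpha^c$ if $\alpha<\beta$ and $L_\beta$ if $\alpha>\beta$, both in~$\B$; the unique top is the least-indexed member of the preferred class (whose preimage sets are \emph{not} in $\B$, harmlessly); and $\{i:x_{\beta+1}\pref x_\beta\}=L_\beta\in\W$ keeps the dominance relation large enough to empty the core. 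You need this, or some equally direct device; the augment-and-extend plan does not close.
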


\begin{proof}
(ii)$\Rightarrow$(ii.a)$\Rightarrow$(ii.b) and (iii)$\Rightarrow$(iii.a)$\Rightarrow$(iii.b) are obvious.
(ii.b)$\Rightarrow$(iii.b) is immediate from Lemma~\ref{c+in_c}.

(iii.b)$\Rightarrow$(i).
Suppose $\# B\geq \nu(\W)$.  We construct a profile $\p$ satisfying the condition
 such that $C(\W, B, \p)=\emptyset$.
Let $\nu=\nu(\W)\ge 2$.
In \emph{Step}~1 of the proof of Theorem~\ref{nakamura-max}, replace the relation~$\succ'$
by an arbitrary asymmetric subrelation $\succ'\, \subset X \times (X\setminus B')$
that is linear on $X\setminus B'$ 
and satisfies $x \succ' y$ whenever $x \in B'$ and $y \in X\setminus B'$.
We replace \emph{Step}~2 of the proof by the following argument.

\emph{Case}: $\nu$ is finite. 
Define $L_{-1}=N$ and 
for all $k\in \{0, \ldots, \nu-1 \}$, 
\[
D_k=(L_{-1}\cap L_0\cap \cdots \cap L_{k-1})\setminus L_k.
\]
Then $\{D_0, \ldots, D_{\nu-1} \}$ is a family of (possibly empty) pairwise disjoint coalitions in~$\B$
such that $L_k\subseteq D_k^c:=N\setminus D_k$ for all $k$ and $\bigcup_{k=0}^{\nu-1}  D_k=N$
($i\in N$ is in the first $D_k$ such that $i\notin L_k$).

Define $\p$ as follows:  for each $k$, all players~$i$ in $D_k$ have the same
linearly ordered preference~$\pref$ with maximal element~$x_k$,
obtained by taking the transitive closure of 
$\succ\setminus\{(x_{k+1},x_k)\}\cup \succ'$.
Obviously, $\p$ is measurable since for each $x$, $y\in X$, 
$\xprefy$ is a finite union of~$D_k$'s.  
Also,  $\dom$ includes $\succ\cup \succ'$.
(If $(x,y)=(x_{k+1},x_k) \in \, \succ$, we have
$\xprefy=D_k^c\supseteq L_k\in\W'$.  
If $(x,y)\in \,\succ'$, then $\xprefy=N$.)
It follows that $C(\W, X, \p)=\emptyset$.

\medskip

\emph{Case}: $\nu$ is infinite.
Note that $(B'\times B')\cap \succ'=\emptyset$.
Define $\p$ as follows:
If $(x,y)\in \,\succ'$, then $\xprefy=N$.
If $(x,y)\notin (B'\times B')\cup \succ'$, then $\xprefy=\emptyset$. 
If $(x,y)=(x_\alpha,x_\beta)\in B'\times B'$, 
$x_\alpha\pref x_\beta$ if and only if either\footnote{
For each $i$, $\{x_\alpha: i\in L_\alpha^c \}\neq \emptyset$ is $i$'s \emph{preferred class} of alternatives in $B'$
and $\{x_\alpha: i\in L_\alpha\}$ is her \emph{less-preferred class}.
Her preference $\pref$ is linear on~$B'$ satisfying the following conditions:
(a)~if two alternatives belong to her preferred class, then $i$ prefers the one with the smaller index;
(b)~$i$ prefers each alternative in her preferred class to any in her less-preferred class;
(c)~if two alternatives belong to her less-preferred class, then $i$ prefers the one with the greater index.
For example, if $i\in L_2\cap L_3\cap L_5$, but $i\notin L_\alpha$ for $\alpha\notin\{2,3,5\}$, 
then $i$ ranks the alternatives in the following order: $x_0$, $x_1$, $x_4$, $x_6$, $x_7$, \ldots, $x_5$,
$x_3$, $x_2$.} %
\begin{itemize}
  \item $i\in L_\alpha^c\cap L_\beta^c$ and $\alpha<\beta$; or 
  \item $i\in L_\alpha^c\cap L_\beta$ and $\alpha\ne \beta$; or 
  \item $i\in L_\alpha\cap L_\beta$ and $\alpha>\beta$.
\end{itemize}
This is equivalent to saying that
\footnote{
$x_\alpha\pref x_\beta$ iff either
(1)~$\alpha<\beta$ and $i\in (L_\alpha^c\cap L_\beta^c)\cup(L_\alpha^c\cap L_\beta)
=L_\alpha^c\cap(L_\beta^c\cup L_\beta)=L_\alpha^c$
or
(2)~$\alpha>\beta$ and $i\in (L_\alpha^c\cap L_\beta)\cup(L_\alpha\cap L_\beta)=L_\beta$.
Note that the first two cases can be restated: 
if $\alpha<\beta$,
then $\{i: x_{\beta}\pref x_\alpha\}=L_\alpha$ and
$\{i: x_\alpha\pref x_{\beta}\}=L_\alpha^c$.} %
\[
\xprefy=\{i:x_\alpha\pref x_\beta\}=
	\left\{ \begin{array}{ll}
	L_\alpha^c    & \mbox{if $\alpha<\beta$}, \\
	L_\beta      & \mbox{if $\alpha>\beta$}, \\
	\emptyset & \mbox{if $\alpha=\beta$}.
	\end{array}
	\right.
\]

The profile $\p$ is measurable since $N$, $\emptyset$, $L_\alpha^c$, $L_\beta\in \B$.

Clearly, each $i$'s preference $\pref$ is a linear order on~$X$ that has a unique maximal element of~$B$,
namely the alternative in her preferred class 
$\{x_\alpha: i\in L_\alpha^c \}\neq \emptyset$ with the smallest index.

The dominance relation $\dom$ clearly includes $\succ\cup \succ'$,
since $\{i: x_{\beta+1}\pref x_\beta \}=L_\beta\in \W$ for each $\beta\in \nu$, for example.
It follows that $C(\W, X, \p)=\emptyset$.\end{proof}

\begin{remark}
The argument for the finite $\nu$ case of the proof does not go through for the infinite case, 
since $D_k$ does not necessarily belong to $\B$ when $k$ is infinite.
The argument for the infinite case causes difficulty when applied to the finite case.
For example, suppose $\nu=4$ and $i \in L_0 \cap L_3$.
Since $i\in L_3$ and $4>3$, we have $x_4\pref x_3$ by the definition of~$\p$.
Since $i\in L_0$ and $3>0$, we have $x_3\pref x_0$.
Since $x_0=x_4$, $\pref$ is not asymmetric.\end{remark}

\subsection{The results for the core of a collection of winning sets}

\subsubsection{Extended framework}

We extend the framework by introducing a collection~$\B'$, consisting of subsets of the set~$N$ of players.
Recall that $\B$~consists of the \emph{coalitions}---intuitively, they are the observable or identifiable sets of players.
In contrast, $\B'$~consists of the sets \emph{for which one can assign winning/losing status}---sometimes 
they are the sets whose size is well-defined (Example~\ref{borel-lebesgue}); 
other times they are the sets that are half-identifiable or listable (Example~\ref{recursive-re}).
We assume $\B\subseteq \B'$, which means that one can assign such a status for any coalition.

A \textbf{collection $\W'$ of winning sets} is a subcollection of~$\B'$ satisfying 
$\emptyset\notin \W'$ and $\W'\neq \emptyset$.
Given $\W'$, the most natural simple game one can define is the following:
the \textbf{$\B$-simple game $\W$ induced by $\W'$} consists of the winning coalitions 
(winning sets that are also coalitions), that is, $\W=\W'\cap\B$.
We can define the \emph{core} and the \emph{core without majority dissatisfaction} as before, 
with $\W$ replaced by $\W'$.  Lemma~\ref{c+in_c} and $\W\subseteq \W'$ imply the following.

\begin{lemma} \label{bet-c+c}
Let $\W'\subseteq\B'$ and $\W=\W'\cap \B$.  Then the following statements are true:\\
\textup{(i)}~$C^+(\W',B,\p)\subseteq C^+(\W,B,\p)\subseteq C(\W,B,\p)$.\\
\textup{(ii)}~$C^+(\W',B,\p)\subseteq C(\W',B,\p)\subseteq C(\W,B,\p)$.
\end{lemma}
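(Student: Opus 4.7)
The plan is to observe that both inclusions rest on a single monotonicity principle: enlarging the collection of winning sets can only make it harder to survive a blocking test, so both the core and the core without majority dissatisfaction shrink when $\W$ is replaced by the larger $\W'$. Combining this monotonicity with the already-proved Lemma~\ref{c+in_c} (applied to $\W$ and to $\W'$ separately) yields all four inclusions.

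Concretely, I would first record the key observation that $\W = \W'\cap\B \subseteq \W'$, and then handle the monotonicity inclusions. For $C(\W',B,\p)\subseteq C(\W,B,\p)$, I would take $x \in C(\W',B,\p)$ and suppose for contradiction that some $y\in B$ and $S\in\W$ satisfy $y\pref x$ for all $i\in S$; since $S\in\W\subseteq \W'$, this contradicts $x\in C(\W',B,\p)$. For $C^+(\W',B,\p)\subseteq C^+(\W,B,\p)$, the argument is parallel: if $S\in\W$ witnessed $x\notin C^+(\W,B,\p)$, then $S\in\W'$ would witness $x\notin C^+(\W',B,\p)$.

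Finally, the two inclusions of the form $C^+\subseteq C$, namely $C^+(\W,B,\p)\subseteq C(\W,B,\p)$ in part (i) and $C^+(\W',B,\p)\subseteq C(\W',B,\p)$ in part (ii), are instances of Lemma~\ref{c+in_c}; the proof of that lemma nowhere uses that the collection of winning sets lies in a Boolean algebra, so it applies verbatim to $\W'$. Chaining these inclusions gives (i) and (ii).

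There is no real obstacle here: the statement is essentially a bookkeeping lemma about how the two solution concepts vary with the winning collection, and both the monotonicity step and the appeal to Lemma~\ref{c+in_c} are immediate from the definitions. The only point that merits a line of commentary in the write-up is that Lemma~\ref{c+in_c}, though originally stated for $\B$-simple games, extends without change to any collection $\W'\subseteq\B'$ satisfying $\emptyset\notin\W'$ and $\W'\neq\emptyset$, since its proof uses only the defining inequalities of $C^+$ and $C$.
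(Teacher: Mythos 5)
Your proposal is correct and follows exactly the route the paper intends: the paper gives no explicit proof, merely noting that the lemma follows from Lemma~\ref{c+in_c} together with $\W=\W'\cap\B\subseteq\W'$, which is precisely the monotonicity-plus-$C^+\subseteq C$ argument you spell out. Your extra remark that Lemma~\ref{c+in_c} applies verbatim to $\W'$ is a worthwhile clarification but not a departure from the paper's approach.
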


Since we had better be able to identify who prefers a given alternative to another, 
we leave the definitions of \emph{measurable} profiles and \emph{profiles for} an agenda unaltered.

\subsubsection{Justification for the extended framework}

We have assumed $\B'=\B$ in the previous sections, 
but there are reasons for distinguishing the collection~$\B'$ 
from the Boolean algebra~$\B$ of coalitions.
We illustrate that by two examples in this section.
We also give an example of a decision-making situation where the framework makes sense.

The following example gives some, though limited, justification for the extended framework:

\begin{example} \label{borel-lebesgue}
Let $N=[0,1]$ be the unit interval on the set of real numbers, 
$\B$ the $\sigma$-algebra of Borel sets (i.e., $\B$ is the smallest $\sigma$-algebra containing all open sets in $[0,1]$),
$\B'$ the $\sigma$-algebra of Lebesgue measurable sets, and
$\mu$ Lebesgue measure.
Let $\W'\subseteq\B'$ be any collection of winning sets defined in terms of the measure alone.
(For example, $\W'$ consists of the sets $S\in\B'$ satisfying $\mu(S)>1/2$.
Alternatively, $\W'$ consists of the sets $S\in\B'$ satisfying $\mu(S)\geq 2/3$.)
We do have $\B'\supsetneq \B$, though this fact is not often exploited.\footnote{
For example, \citet{dasgupta-m08} adopt this framework
to formalize the concept that an axiom is satisfied for almost all profiles.
However, their focus is measurable profiles, which means they are mainly interested in 
certain coalitions in~$\B$.
 \citet{banks-dl06}, when restricted to our framework, focus on simple games~$\W\subseteq\B$, 
 rather than on $\W'\subseteq \B'$.} %
Observe that for each $S'\in \B'$ there is $S\in \B$ such that $S\subseteq S'$ and $\mu(S)=\mu(S')$.\footnote{
Let $S'\in\B'$.  Let $E'=S'^c\in \B'$. Apply the following proposition \citep[page~293]{royden88}
and let $S=E^c$: 
If $E'\subseteq [0,1]$ is any set, then there is a Borel set $E\in \B$ such that $E'\subseteq E$ and $\mu^*(E')=\mu(E)$,
where $\mu^*$ is Lebesgue outer measure.} %
\end{example}

The last observation in Example~\ref{borel-lebesgue} explains why it fails to give a compelling justification for
the extended framework, because of the following lemma.
The lemma suggests that under a certain condition, it is not very meaningful to introduce a collection $\W'\subseteq \B'$,
even when it makes sense to extend $\B$ to $\B'$.
The proof is straightforward.

\begin{lemma}
Let $\W'\subseteq\B'$ and $\W=\W'\cap \B$.
Suppose that for each winning set $S'\in \W'$, 
there exists a winning coalition $S\in\W$ satisfying $S\subseteq S'$.
Then $C(\W,B,\p)=C(\W',B,\p)$ and $C^+(\W,B,\p)=C^+(\W',B,\p)$.\end{lemma}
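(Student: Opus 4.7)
The plan is to prove the two set equalities by establishing both inclusions separately, using Lemma~\ref{bet-c+c} for one direction and the hypothesis for the other.

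For the inclusions $C(\W',B,\p)\subseteq C(\W,B,\p)$ and $C^+(\W',B,\p)\subseteq C^+(\W,B,\p)$, I would invoke Lemma~\ref{bet-c+c} directly, since $\W=\W'\cap\B\subseteq\W'$. (In fact, these inclusions only require $\W\subseteq\W'$ and do not use the hypothesis of the lemma at all.)

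For the reverse inclusions, I would argue by contraposition, exploiting the simple fact that if a coalition $S$ is contained in a set $S'$, any condition of the form ``for all $i\in S'$, $\ldots$'' immediately implies the same condition with $S'$ replaced by $S$. Specifically, for the core: take $x\in B\setminus C(\W',B,\p)$. By the definition of dominance there exist $y\in B$ and $S'\in\W'$ with $y\pref x$ for all $i\in S'$. By hypothesis, pick $S\in\W$ with $S\subseteq S'$; then $y\pref x$ for all $i\in S$, so $y\dom x$ in the game $\W$, witnessing $x\notin C(\W,B,\p)$. For the core without majority dissatisfaction: take $x\in B\setminus C^+(\W',B,\p)$. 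By Definition~\ref{c+}, there is $S'\in\W'$ such that for each $i\in S'$ some $y_i\in B$ satisfies $y_i\pref x$. Again choose $S\in\W$ with $S\subseteq S'$; then the same witnesses $y_i$ work for all $i\in S$, so $x\notin C^+(\W,B,\p)$.

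There is no real obstacle here; the argument is a direct unwinding of definitions. The only point worth being careful about is keeping straight which direction uses which hypothesis: $\W\subseteq\W'$ (which is automatic from $\W=\W'\cap\B$) gives the ``$\W'$-core is contained in the $\W$-core'' direction via Lemma~\ref{bet-c+c}, while the assumption that every winning set in $\W'$ contains a winning coalition in $\W$ gives the reverse direction. This is why the lemma makes the introduction of $\W'$ redundant under that assumption, as stated in the surrounding discussion.
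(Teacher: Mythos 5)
Your proof is correct and is exactly the ``straightforward'' argument the paper has in mind (the paper omits the proof, remarking only that it is straightforward): one inclusion from $\W\subseteq\W'$ via Lemma~\ref{bet-c+c}, the reverse by replacing the witnessing winning set $S'\in\W'$ with a winning coalition $S\subseteq S'$ in $\W$, which preserves the universally quantified dominance conditions in both definitions. Nothing is missing.
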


Now we give a more compelling justification for the framework in which $\B'$ and $\W'$
are introduced:
\begin{example} \label{recursive-re}
Let $N=\{0,1,2,\ldots\}$ be the set of natural numbers.
A natural and faithful way to model \emph{identifiable} or \emph{half-identifiable} sets of players is to let
$\B$ be the algebra of recursive sets (coalitions) and
$\B'$ the lattice of r.e.\ sets.\footnote{
According to \emph{Church's thesis}~\citep{odifreddi92}, a set of players (natural numbers)
 is \emph{recursive} if
there is an algorithm that, given any player, will decide whether she is in the set.
A set of players is \emph{r.e.\ (recursively enumerable)} if there is an algorithm that lists, in some order,
the members of the set; the condition in general does not mean that there is an algorithm to decide 
whether a given player belongs to it.
A set $A\subseteq N$ is recursive if and only if $A$ and $A^c$ are both r.e.
\citet{odifreddi92} gives detailed discussion of recursion theory (computability theory).
The papers by \citet{mihara97et,mihara99jme} contain short reviews of recursion theory.} 

The \emph{first reason} for introducing~$\B'\supseteq \B$ in our framework is that we cannot let $\B$ be
the lattice of r.e.\ sets, since the lattice is not a Boolean algebra.

The \emph{second reason} is that the natural system $(W_e)_{e\in N}$ for indexing
r.e.\ sets is easier to handle than
the system $(\varphi_e)_{e\in N}$ that can be used for indexing recursive sets
(where $\varphi_e$ is the $e$th partial recursive function and $W_e$ its domain).
For example, while $\varphi_e$ corresponds to (i.e., is the characteristic function of) 
no recursive set for some $e\in N$
(and there is no algorithm to decide whether a given $e$ corresponds to some recursive set),
$W_e$ corresponds to (i.e., is) an r.e.\ set for any $e\in N$.
One can therefore write any class of r.e.\ sets as $\{W_e: e\in I\}$ for some 
(not necessarily r.e.) set~$I$ of indices,
without worrying that some $W_{e}$ might correspond to no r.e.\ set.
\citet[page~226]{odifreddi92} gives more reasons for preferring $(W_e)_{e\in N}$
to $(\varphi_e)_{e\in N}$.

We now give a \emph{reason for introducing $\W'\subseteq\B'$} into our framework.
We exhibit $\W'$, $\W=\W'\cap\B$, and $\p$ for which $C^{+}(\W, X, \p)\neq C^{+}(\W',X,\p)$.

A set is \emph{cofinite} if it is the complement of a finite set; otherwise, it is \emph{coinfinite}.
A coinfinite r.e.\ set~$T$ is \textbf{maximal coinfinite}\footnote{
What we call \emph{maximal coinfinite sets} are known as \emph{maximal sets} in recursion theory
 \citep[page~288]{odifreddi92}.
\citet{friedberg58} constructively proves the existence of such sets.} %
if it has only trivial supersets:
if $S$ is a coinfinite r.e.\ set satisfying $S\supseteq T$, then $S\setminus T$ is finite.

Let $\W'_{0}$ be the collection of all maximal coinfinite sets.
Since maximal coinfinite sets are nonrecursive, we have $\W_{0}=\W'_{0}\cap \B=\emptyset$,
which is not a simple game according to our definition.
We can nevertheless conclude $\W_{0} \neq \W'_{0}$ and define the core,
 obtaining $C(\W_{0},X,\p)=C^{+}(\W_{0},X,\p)=X$ for any profile~$\p$.
 Let $X=\{a,b,c\}$ and define a profile~$\p$ by $\pref=\{(a,b)\}$ for all~$i\in N$.
Then, $C(\W'_{0},X,\p)=C^{+}(\W'_{0},X,\p)=\{a,c\}\neq X$.

Let $\W'_{1}=\{S'\in\B': \textrm{$S'\supseteq T$ for some maximal coinfinite $T$}\}$.
Then we can easily show that $\W_{1}=\W'_{1}\cap \B$ consists of all cofinite sets;
therefore, $\W_{1} \neq \W'_{1}$.
Let $X=\{x_{-1}, x_0, x_1, x_2, \ldots \}$ be a countable set.
Let $T\in \B'$ be a maximal coinfinite set.
Define a profile $\p$ as follows: $\pref=\{(x_i,x_{-1})\}$ if $i\in T$; $\pref=\emptyset$ otherwise.
Then, $\xprefy$ is $\{i\}$ if $x=x_i$ for $i\in T$ and $y=x_{-1}$; it is $\emptyset$ otherwise.
So $\p$ is measurable and each player has a maximal alternative.
We have $C(\W_{1},X,\p)=C(\W'_{1},X,\p)=X$.
Also, $\{i: x\notin\max\pref\}$ is $T$ if $x=x_{-1}$; it is $\emptyset$ otherwise.
So, $C^+(\W_{1},X,\p)=X$ but $C^{+}(\W'_{1},X,\p)=X\setminus\{x_{-1}\}$.\end{example}

We finally give an example of a decision-making setting to which the framework 
and Example~\ref{recursive-re} can be applied.
Our example is one of medical treatment, but it also represents other examples of
\emph{multi-criterion decision making} such as court rulings and facility location problems.

Consider a decision support system that helps medical professionals by selecting a reasonable set of alternatives from which they can choose.
Such a system can be regarded as a \emph{solution} (like the core) which assigns 
a set of alternatives to each profile of preferences.

Here, an \emph{alternative} describes, for instance, what and how much medicine 
to prescribe to the patient and when and what operation to perform on her.
We assume that there are countably many alternatives.
This is a natural assumption if the system is implemented on a digital computer.

A \emph{player} in the framework is understood to be a \emph{criterion} 
such as sex, age, temperature, blood pressure, 
or the result of a medical examination.  A criterion may be a combination of some of these.
We assume that there are countably many criteria.
The infinite criterion model is appropriate if one cannot tell in advance how many criteria
one needs to evaluate to reach a decision.
To formalize the idea that \emph{coalitions} are identifiable, 
we assume that each coalition is \emph{recursive}; that is,
there is an algorithm to decide whether each criterion belongs to the coalition.

When $i$~is a criterion, its \emph{preference} is a provisional judgment interpreted as follows:
$x \pref y$ if and only if ``alternative $x$ is better than $y$ in terms of criterion~$i$
for the patient in question.''  So, whether $x\pref y$ is true is determined only after a patient is given.
We assume that a profile is recursive: 
there is an algorithm for deciding (for all $x$, $y$, and $i$) whether $x\pref y$.
(It follows that profiles are measurable.)
This formalizes the idea that the system should give an answer at least
 for those profiles that it can grasp.

We want the decision support system to eliminate the alternatives that are, 
for ``almost all criteria,'' worse than some other alternatives in an agenda~$B$.
In other words, we want to eliminate alternatives $x$ such that 
the set $\{ i : x \notin \max_B \pref\}$ contains ``almost all criteria.''

As a suitable solution for that purpose, we can adopt the core without majority dissatisfaction.
We also formalize the intuition that a set $S'\subseteq N$ \textbf{contains
 ``almost all criteria''} as follows: $S'\supseteq T$ for some \emph{maximal coinfinite} set~$T$
(defined in Example~\ref{recursive-re}).
According to this formalization, if $S'$ contains ``almost all criteria,'' then
(it does not necessarily mean that it contains all but finitely many criteria, but)
one cannot algorithmically generate infinitely many criteria 
not belonging to this set (Appendix~\ref{maximal_compl_no_re}).

Suppose, for the time being, that one can assign winning/losing status 
only to (recursive) coalitions.
One can naturally define that a coalition $S$ is \textbf{winning} if it contains
``almost all criteria''; that is, $S\supseteq T$ for some maximal coinfinite~$T$.
\emph{A problem is that the set $\{ i : x \notin \max_B \pref\}$ may 
\textup{(i)}~contain ``almost all criteria''
without 
\textup{(ii)}~containing a winning coalition.  That is, we want to eliminate~$x$,
but the definition (Definition~\ref{c+}) of the core without majority dissatisfaction
does not allow us to eliminate it.}
Indeed, Example~\ref{recursive-re} shows that when the set itself is maximal coinfinite, 
it contains no winning coalition.

To rectify this problem, we assign winning/losing status to every r.e.\ set of criteria.
(An \emph{r.e.\ set} is half-identifiable in the sense that there is an algorithm
 that enumerates its elements.)
We define that an r.e.\ set is \textbf{winning} if it contains ``almost all criteria''
as above.  
We then have the equivalence of the following for r.e.\ sets $S'$:
(i)~$S'$ contains ``almost all criteria'';
(ii)~$S'$ contains a winning r.e.\ set;
(iii)~$S'$ is winning.
Fortunately, when $B$ is r.e., the set $\{ i : x \notin \max_B \pref\}$ is also r.e., 
since it can be rewritten as $\{ i : \exists y [y \in B \wedge y\pref x] \}$.
So to determine whether to reject~$x$ here, we only need to check whether this set is winning.

\subsubsection{The results for the extended framework}

Before stating the main result for the extended framework, we need to extend the notion of
the Nakamura number.

Let $\B'\supseteq \B$ be a collection that includes $ \B$, the Boolean algebra of coalitions.
Let $\W'\subseteq \B'$ be a collection of winning sets.
A nonempty collection $\Z\subseteq\B$ is a \emph{($\B$)-cover} of $S'\in \B'$ if 
$\bigcup \Z:=\bigcup_{S\in \Z}S \supseteq S'$.
If $\Z$ is a finite cover, then $\bigcup \Z\in\B$, since $\B$ is a Boolean algebra.
So, we can assume without loss of generality that $\#\Z$ is either $1$ or infinite.
Let $M(\W')$ be the collection of pairs $(\Y,\Z)$ such that\\
  (a)~$\Y\subseteq \W'$ and \\
  (b)~$\Z\colon \Y\toto \B$ is a correspondence that maps each winning set $W\in \Y$ to
   a $\B$-cover $\Z(W)$ of $W$
  and that satisfies  
  $\bigcap_{W\in \Y}\bigcup \Z(W)=\emptyset$.
  
\begin{definition} \label{kappa}
The \textbf{kappa number} $\kappa(\W')$ of a collection $\W'$ of winning sets relative to~$\B$
is $+\infty$ (greater than any cardinal number) if $M(\W')=\emptyset$;
otherwise, it is the cardinal number given by\footnote{
The right-hand side is well-defined, since every \emph{set} 
of ordinals has a supremum \citep[page~141]{hrbacek-j84} as well as the least element.
(The same is not true for \emph{classes} that are not sets, like the class of \emph{all} cardinal numbers.)
To see the supremum is a cardinal number, let $\sup_\gamma \alpha_\gamma=\alpha$,
where each $\alpha_\gamma$ is a cardinal number.
It suffices to show that $\alpha$ is an initial ordinal.
Suppose $\beta<\alpha$.
Then, by the definition of the supremum, $\beta<\alpha_\gamma$ for some~$\gamma$.
Since $\alpha_\gamma$ is a cardinal, $\#\beta<\alpha_\gamma$.
But $\alpha_\gamma\le \alpha$ implies that $\alpha$ is not equipotent to~$\beta$.} %
 \[ \kappa(\W')=\min_{(\Y,\Z)\in M(\W')} \max\{\#\Y, \sup\{\#\Z(W): W\in \Y\}\}.\]
\end{definition}

There is another, obvious extension $\nu'$ (defined for collections of winning sets) 
of the Nakamura number~$\nu$ (defined for simple games).
Let $\nu'(\W')$ be defined exactly as before, with $\W$ replaced by $\W'$.

\begin{lemma} \label{nu-kappa-nu}
$2\le \nu'(\W')\leq \kappa(\W')\leq \nu(\W)$ for $\W=\W'\cap \B$.
\end{lemma}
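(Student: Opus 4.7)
The plan is to verify the three inequalities in turn. Since the outer two bounds are simple and the middle inequality is the substantive one, I would start by disposing of the easy bounds and then build the cover witnessing $\kappa(\W')\le\nu(\W)$.

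First, for $2\le \nu'(\W')$, I observe that if $\bigcap\W'\ne\emptyset$ then $\nu'(\W')=+\infty$ and we are done. Otherwise $\nu'(\W')$ is a cardinal; since $\emptyset\notin\W'$, a single winning set $W$ satisfies $\bigcap\{W\}=W\ne\emptyset$, so any subcollection with empty intersection must have at least two elements. Hence $\nu'(\W')\ge 2$.

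Second, for $\nu'(\W')\le \kappa(\W')$, I would assume $\kappa(\W')<+\infty$ (otherwise trivial) and pick $(\Y,\Z)\in M(\W')$ attaining the minimum in Definition~\ref{kappa}. Since each $\Z(W)$ is a $\B$-cover of $W$, we have $W\subseteq \bigcup\Z(W)$, and hence
\[
\bigcap_{W\in\Y} W\;\subseteq\;\bigcap_{W\in\Y}\bigcup\Z(W)=\emptyset.
\]
Thus $\Y\subseteq\W'$ is a subcollection with empty intersection, which gives $\nu'(\W')\le \#\Y\le \max\{\#\Y,\sup_{W\in\Y}\#\Z(W)\}=\kappa(\W')$.

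Third, for the main inequality $\kappa(\W')\le \nu(\W)$, I would assume $\nu(\W)<+\infty$ and take a witnessing collection $\W''\subseteq\W$ with $\#\W''=\nu(\W)$ and $\bigcap\W''=\emptyset$. The natural candidate pair is $\Y=\W''\subseteq \W\subseteq \W'$, together with the trivial self-cover $\Z(W)=\{W\}$; this is legitimate precisely because $W\in\W\subseteq\B$, so $\{W\}$ is a $\B$-cover of $W$. Then $\bigcup\Z(W)=W$, so $\bigcap_{W\in\Y}\bigcup\Z(W)=\bigcap\W''=\emptyset$, giving $(\Y,\Z)\in M(\W')$. Since each $\#\Z(W)=1$ and $\#\Y=\nu(\W)\ge 2$, we obtain $\kappa(\W')\le \max\{\nu(\W),1\}=\nu(\W)$.

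The only mildly delicate point is making sure the definitional constraints on $(\Y,\Z)\in M(\W')$ are met in the last step—in particular that each $\Z(W)$ is a genuine $\B$-cover, which requires $W\in\B$; this is exactly why we must pass through the induced game $\W=\W'\cap\B$ rather than use $\W'$ directly. No transfinite machinery is needed, since the argument reduces to packaging a given witnessing subcollection into the cover formalism.
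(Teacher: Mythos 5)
Your proposal is correct and follows essentially the same route as the paper's own proof: the first bound from $\emptyset\notin\W'$, the second by noting $W\subseteq\bigcup\Z(W)$ forces $\bigcap_{W\in\Y}W=\emptyset$ for an optimal $(\Y,\Z)$, and the third by packaging a witnessing subcollection of $\W$ with the trivial self-covers $\Z(W)=\{W\}$. The only difference is that you spell out the easy first inequality and the verification that $\{W\}$ is a legitimate $\B$-cover in slightly more detail than the paper does.
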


\begin{proof}
$2\le \nu'(\W')$ is a consequence of the assumptions 
$\emptyset\notin \W'$ and $\W'\neq \emptyset$.

We show $\nu'(\W')\le \kappa(\W')$ next.
If $\kappa:=\kappa(\W')=+\infty$, the result is obvious.
So, suppose otherwise.
By Definition~\ref{kappa},
there exists $(\Y,\Z)\in M(\W')$ satisfying (a) and~(b) in the definition of $M(\W')$
such that
\[
\kappa = \max\{\#\Y, \sup\{\#\Z(W): W\in \Y\}\}\ge \#\Y.
\]
Since $W\subseteq \bigcup\Z(W)$ for all $W\in \Y\subseteq\W'$,
we have $\bigcap_{W\in\Y}W \subseteq \bigcap_{W\in Y}  \bigcup\Z(W)=\emptyset$.
It follows from the definition of $\nu'(\W')$ that $\nu'(\W')\le \#\Y\le \kappa$.

We show $\kappa(\W')\leq \nu(\W)$ finally.
Suppose $\nu:=\nu(\W) \neq +\infty$ as above.
Then, by the definition of $\nu$,
there is a collection $\Y\subseteq\W$ such that $\#\Y=\nu\ge 2$ and $\bigcap_{W\in\Y}W=\emptyset$.
For each $W\in \Y$, let $\Z(W)=\{W\}\subseteq\B$.
We claim that $(\Y,\Z)\in M(\W')$.
((a)~$\Y\subseteq \W\subseteq\W'$.
(b)~$\Z(W)$ is a cover of $W$ since $\bigcup \Z(W)=W$.
Also, $\bigcap_{W\in\Y}\bigcup \Z(W)=\bigcap_{W\in\Y} W=\emptyset$.)
It follows from Definition~\ref{kappa} that
$\kappa(\W')\leq \max\{\#\Y, \sup\{\#\Z(W): W\in \Y\}\}=\#\Y=\nu$.\end{proof}

The inequalities in Lemma~\ref{nu-kappa-nu} are sometimes strict.
In the following example, we have
$\nu'(\W')<\kappa(\W')<\nu(\W)$.

\begin{example} \label{nu-kappa-nu2} 
Consider $N=[0,1]$, the unit interval on the set of real numbers.
Let $\B'$ be the collection of all subsets of~$N$.
Let $\W'$ be the collection of all dense subsets~$D$ of~$N$, that is,
between any two distinct points in~$N$, there is an element of~$D$.
For example, both $W=\{r \in N:  \textrm{$r$ is rational}\}$ and $W'=\{\sqrt{2}+r \in N: \textrm{$r$ is rational} \}$ 
belong to~$\W'$.
We have $\nu'(\W')=2$, since $W\cap W'=\emptyset$.
Let $\B$ be the collection of all finite or cofinite sets $S\subseteq N$.
Let $\W$ be the collection of all cofinite sets $S\subseteq N$.

We prove that $\W=\W'\cap \B \subsetneq \W'$.
($\subseteq$):  Suppose $S\in\W$.  Then, $S\in\B$.
Since $S^c$ is finite, $S$ is a dense subset of $N$.
($\supseteq$):  Suppose $S\in\W'\cap \B$.
Since $S\in \B$, either $S$ is finite or cofinite.
If $S$ is finite, it is not dense in~$N$.  So $S$~must be cofinite.

We prove that $\nu(\W)=2^{\omega}$,  the cardinality of the continuum.
By Lemma~\ref{nakamura-bounds}, it suffices to show that $\bigcap_{\alpha<\nu}S_\alpha=\emptyset$ for
a collection $\{S_\alpha\in \W: \alpha<\nu\}$ implies $\nu\ge 2^{\omega}$.
Taking the complement,
$2^{\omega}=\#N=\#(\bigcup_{\alpha<\nu}S_\alpha^c)
\le \nu\cdot \sup\{\# S_\alpha^c: \alpha< \nu\}$ \citep[Theorem~1.3, page~188]{hrbacek-j84}.
Since the supremum is at most~$\omega$ (because $S_\alpha^c$ is finite),
we have $\nu\ge 2^{\omega}$.

Finally, we prove that $\kappa(\W')=\omega$.

We first show that $\kappa(\W')\le \omega$. 
Let $\Y=\{W, W'\}$, where $W$, $W'\in\W'$ are the winning sets given above.
Let $\Z(W)=\{\{r\}: r\in W\}\subset\B$ and 
$\Z(W')=\{\{r'\}: r'\in W'\}\subset\B$.
It is straightforward to show that $(\Y, \Z)\in M(\W')$.
Since $\#\Y=2$ and $\sup\{\#\Z(W), \#\Z(W')\}=\omega$,
we have $\kappa(\W')\le \omega$.

We next show that $\kappa(\W')$ is not finite.
Suppose it is finite.  Then, there exists $(\Y,\Z)\in M(\W')$
such that $\Y$ is finite and $\Z(W)\subset\B$ is finite for each $W\in\Y$.
Since $\B$~is a Boolean algebra, this implies that $\bigcup \Z(W)\in \B$,
which in turn implies that $\bigcup \Z(W)$ is either finite or cofinite.
Suppose $\bigcup \Z(W)$ is finite for some $W\in\Y\subseteq\W'$.
Then  $\bigcup \Z(W)\supseteq W$ implies that $W\in \W'$ is finite,
a contradiction.
Hence $\bigcup \Z(W)$ is cofinite for all $W\in\Y$.
Being a finite intersection of cofinite sets, $\bigcap_{W\in\Y}\bigcup Z(W)$ is nonempty,
contrary to assumption.\end{example}

\begin{lemma} \label{kappa-extends-nu}
$\kappa$ is an extension of~$\nu$.  That is, if $\W'\subseteq \B\subseteq \B'$,
then $\kappa(\W')=\nu(\W')$.\end{lemma}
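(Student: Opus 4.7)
The plan is to deduce this immediately from Lemma~\ref{nu-kappa-nu}. The hypothesis $\W' \subseteq \B \subseteq \B'$ means that every element of $\W'$ is itself a coalition in $\B$. Hence the induced simple game is $\W := \W' \cap \B = \W'$, and Lemma~\ref{nu-kappa-nu} specializes to
\[
\nu'(\W') \le \kappa(\W') \le \nu(\W) = \nu(\W').
\]

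Next I would verify the trivial identity $\nu'(\W') = \nu(\W')$. This is simply a matter of unpacking definitions: both $\nu$ and $\nu'$ are defined by the same formula, namely the minimum cardinality of a subcollection with empty intersection, with the only difference being the ambient collection ($\B$ versus $\B'$) from which the original collection is drawn. Since $\W' \subseteq \B$, the subcollections $\W'' \subseteq \W'$ appearing in either minimization are the same, so the two minima agree. Combining this with the displayed inequality yields $\kappa(\W') = \nu(\W')$.

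The argument is entirely routine, so there is no real obstacle; the only point to be slightly careful about is the bookkeeping distinction between $\nu$ and $\nu'$, and the observation that in the kappa definition one can always witness the upper bound by taking $\Z(W) = \{W\}$ for each $W \in \W' \subseteq \B$, which is exactly how the $\kappa(\W') \le \nu(\W)$ inequality in Lemma~\ref{nu-kappa-nu} was established. No new construction is needed.
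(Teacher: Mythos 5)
Your proposal is correct and follows exactly the paper's own argument: note that $\W=\W'\cap\B=\W'$, apply Lemma~\ref{nu-kappa-nu} to get $\nu'(\W')\le\kappa(\W')\le\nu(\W')$, and observe that $\nu'(\W')=\nu(\W')$ since $\nu'$ and $\nu$ are defined by the same minimization once $\W'\subseteq\B$. No differences worth noting.
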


\begin{proof}
Suppose $\W'\subseteq \B$.
Then, $\W=\W'\cap\B=\W'$.
Lemma~\ref{nu-kappa-nu} implies $\nu'(\W')\leq \kappa(\W')\leq \nu(\W')$.
Since $\nu'$ extends $\nu$ and $\W'$ is a simple game in this case, we have $\nu'(\W')=\nu(\W')$,
from which the conclusion follows.\end{proof}

We now give the main result for the extended framework.

\begin{theorem} \label{nakamura-max-extended}
Let $\W'\subseteq \B'$ be a collection of winning sets and $B\subseteq X$ an agenda.
Let $\profs$ be the set of measurable profiles of preferences having a maximal element of~$B$.
Then the following two statements are equivalent:\\
\textup{(i)}~$\#B<\kappa(\W')$;\\
\textup{(ii)}~the core $C^+(\W', B, \p)$ without majority dissatisfaction is nonempty for  all $\p\in \profs$.\\
Moreover, these equivalent statements imply, but are not implied by\\
\textup{(iii)}~the core $C(\W', B, \p)$ is nonempty for all $\p\in \profs$.
\end{theorem}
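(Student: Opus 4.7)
The plan is to treat the theorem as three separate claims: (a) the equivalence (i)$\Leftrightarrow$(ii); (b) the implication (i)/(ii)$\Rightarrow$(iii); and (c) the failure (iii)$\not\Rightarrow$(i)/(ii). The overall strategy parallels the proof of Theorem~\ref{nakamura-max}, with $\kappa(\W')$ replacing $\nu(\W)$ and with $\B$-covers of winning sets playing the role that winning coalitions themselves played before.

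For (i)$\Rightarrow$(ii) I argue by contrapositive: assume some $\p\in\profs$ has $C^+(\W',B,\p)=\emptyset$. For each $x\in B$, Definition~\ref{c+} supplies a winning set $W_x\in\W'$ with $W_x\subseteq\{i: x\notin\max_B\pref\}=\bigcup_{y\in B}\{i: y\pref x\}$, where each $\{i: y\pref x\}$ lies in $\B$ by measurability. Setting $\Y=\{W_x: x\in B\}\subseteq\W'$ gives $\#\Y\le \#B$. For each $W\in\Y$ I build a $\B$-cover $\Z(W)$ of $W$ of size at most $\#B$ whose union is $\bigcap_{x\in X_W}\{i:x\notin\max_B\pref\}$, where $X_W=\{x\in B:W_x=W\}$; the cover consists of finite intersections $\bigcap_{x\in X_W}\{i:f(x)\pref x\}$ indexed by functions $f:X_W\to B$. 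One then has $\bigcap_{W\in\Y}\bigcup\Z(W)=\bigcap_{x\in B}\{i:x\notin\max_B\pref\}=\emptyset$, the last equality holding because each $\pref$ has a maximal element in~$B$. This exhibits a pair in $M(\W')$ witnessing $\kappa(\W')\le \#B$, contradicting~(i).

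For (ii)$\Rightarrow$(i), also by contrapositive, assume $\kappa(\W')\le \#B$ and fix $(\Y,\Z)\in M(\W')$ attaining $\max\{\#\Y,\sup_{W}\#\Z(W)\}\le \#B$. Enumerate $\Y=\{W_\alpha:\alpha<\lambda\}$ and $\Z(W_\alpha)=\{Z_{\alpha,\beta}:\beta<\mu_\alpha\}$, pick a surjection $\phi:B\to\lambda$, and adapt the transfinite construction of Proposition~\ref{nakamura-max2}: for each $x\in B$ and $\beta<\mu_{\phi(x)}$ select $y_{x,\beta}\in B\setminus\{x\}$ and declare $y_{x,\beta}\pref x$ precisely for $i\in Z_{\phi(x),\beta}$. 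Arranging the $y_{x,\beta}$ cyclically (finite $\lambda$) or along a well-ordered sequence in $B$ (infinite $\lambda$) preserves asymmetry and keeps each $\{i:y\pref x\}$ equal to some $Z_{\alpha,\beta}$ or a finite union of such, hence in $\B$. The condition $\bigcap_\alpha\bigcup\Z(W_\alpha)=\emptyset$ guarantees that for every $i$ some $\alpha$ satisfies $i\notin\bigcup\Z(W_\alpha)$, so any $x\in\phi^{-1}(\alpha)$ is $\pref$-maximal in $B$ and $\p\in\profs$. For every $x\in B$ one has $W_{\phi(x)}\subseteq\bigcup\Z(W_{\phi(x)})\subseteq\{i:x\notin\max_B\pref\}$, so $C^+(\W',B,\p)=\emptyset$.

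Part~(b) follows immediately from Lemma~\ref{bet-c+c}(ii), which yields $C^+(\W',B,\p)\subseteq C(\W',B,\p)$. For~(c), I would exhibit an example in the extended framework---perhaps adapting Example~\ref{nu-kappa-nu2} or~\ref{recursive-re}---with $\#B\ge\kappa(\W')$ for which $C(\W',B,\p)\ne\emptyset$ for every $\p\in\profs$: a ``compromise'' alternative that nobody ranks maximally can survive in the core because no single replacement alternative is preferred by a whole winning set, even though $C^+$ is empty for some $\p$. The main obstacle lies in the cover construction of~(i)$\Rightarrow$(ii) when $X_W$ is infinite: the distributive cover then has cardinality $\#B^{\#X_W}$ and its constituent sets are infinite intersections that need not lie in $\B$, so one must either arrange the choice $x\mapsto W_x$ so that every fiber $X_W$ is finite, or extract a subfamily of at most $\#B$ members of $\B$ that still covers $W$ while keeping the cross-$W$ intersection empty.
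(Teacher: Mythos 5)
Your overall architecture matches the paper's, but there are genuine gaps in two of the three parts you would need. First, the direction (iii)$\not\Rightarrow$(i) is not proved at all: ``I would exhibit an example, perhaps adapting Example~\ref{nu-kappa-nu2}'' is a statement of intent, and the substance of this direction \emph{is} the example and its verification. The paper takes Example~\ref{nu-kappa-nu2} ($N=[0,1]$, $\B$ the finite--cofinite algebra, $\W'$ the dense subsets, so $\W=\W'\cap\B$ is the family of cofinite sets, with $\kappa(\W')=\omega$ but $\nu(\W)=2^\omega$), chooses $B$ countable, and then proves $C(\W',B,\p)=C(\W,B,\p)$ for every measurable $\p$: any $\yprefx$ lies in $\B$, so if it contains a (necessarily infinite) dense winning set it must be cofinite and hence already winning in~$\W$. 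Theorem~\ref{nakamura-max} with $\#B=\omega<2^\omega=\nu(\W)$ then gives (iii) while (i) fails. The mechanism is the gap between $\kappa(\W')$ and $\nu(\W)$, not the ``compromise alternative'' intuition you offer (which explains $C^+\subsetneq C$ for a fixed profile, not the failure of the implication between the universally quantified statements). Second, your (ii)$\Rightarrow$(i) construction is a sketch precisely where the work lies: ``arranging the $y_{x,\beta}$ cyclically or along a well-ordered sequence preserves asymmetry'' is what needs proof. The paper enumerates $B=\{x_\alpha:\alpha<\#B\}$, declares $x_{\alpha+\beta}\pref x_\alpha$ only for $\beta>0$ (so asymmetry is automatic), uses $N$ to eliminate the alternatives $x_\alpha$ with $\kappa'\le\alpha<\#B$, and must verify via Lemmas~\ref{lem:card-lim} and~\ref{lem:sumsum} that $\alpha+\beta<\#B$, i.e., that the dominating alternative actually exists in~$B$. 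None of this ordinal bookkeeping appears in your sketch, and without it the measurability and maximality claims are unsupported.

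On (i)$\Rightarrow$(ii), your ``main obstacle'' is largely self-inflicted. The paper never tries to make $\bigcup\Z(W)$ equal to $\bigcap_{x\in X_W}\{i:x\notin\max_B\pref\}$; it simply sets $\Z(W_x)=\{\yprefx:y\in B\}$, whose union is $\xnotmax\supseteq W_x$, notes each member is a coalition by measurability and $\#\Z(W_x)\le\#B$, and concludes $\bigcap_{x\in B}\xnotmax=\emptyset$ because every player has a maximal element of~$B$. No distributive law, no exponential blowup, no infinite intersections. That said, you have put your finger on a real subtlety that the paper glosses over: when $x\mapsto W_x$ is not injective, $\Z$ as written is not obviously a well-defined correspondence on the \emph{set} $\Y$, and collapsing to one representative per winning set can leave $\bigcap_{W\in\Y}\bigcup\Z(W)$ nonempty. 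For finite $B$ this is easily repaired (each $\xnotmax$ is a finite union of coalitions, hence in $\B$, and one may take $\Z(W)=\{\bigcap_{x\in X_W}\xnotmax\}$); for infinite $B$ your proposed repair fails for exactly the reasons you state, so the one direction you engage with most seriously remains incomplete as written.
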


\begin{proof}\footnote{Lemmas~\ref{bet-c+c} and \ref{nu-kappa-nu} are not useful for proving 
Theorem~\ref{nakamura-max-extended} from Theorem \ref{nakamura-max}, and vice versa.}
(i)$\Rightarrow$(ii).
Suppose $C^+(\W',B,\p)=\emptyset$ for some profile $\p\in \profs$.
We show that $\#B\geq \kappa(\W')$.

By the definition (Definition~\ref{c+}) of $C^+$,
for each $x\in B$, there is a winning set $W_x\in\W'$ such that $W_x\subseteq \xnotmax$.
We claim that $\bigcap_{x\in B}\xnotmax=\emptyset$.
(Otherwise, there is a player $i$ whose preference has no maximal element of~$B$.)

Let $\Y=\{W_x: x\in B\}\subseteq \W'$.  We have $\#\Y\leq \#B$.
For each $W_x\in\Y$, let
\[ \Z(W_x)=\{ \yprefx: y\in B\}. \]
  We have $\#\Z(W_x)\leq \#B$.

We claim that $(\Y,\Z)\in M(\W')$.
(\emph{Details}. We verify (b) of the definition of $M(\W')$.
First, $\Z(W_x)\subseteq \B$ since $\p$ measurable implies $\yprefx\in \B$.
Second, $\Z(W_x)$ is a cover of $W_x$ since 
$\bigcup Z(W_x)=\bigcup_{y\in B}\yprefx = \xnotmax \supseteq W_x$.
Third, $\bigcap_{W_x\in \Y}\bigcup\Z(W_x) = \bigcap_{x\in B} \xnotmax = \emptyset$ 
by the claim above.)

By the definition (Definition~\ref{kappa}) of $\kappa$, we get
\[ 
\kappa(\W')\leq \max\{\#\Y, \sup\{\#\Z(W_x): x\in B\} \} \leq \#B.
\]

(ii)$\Rightarrow$(i).
Suppose $\#B\ge \kappa:=\kappa(\W')$.
We construct a profile $\p\in\profs$ such that $C^+(\W',B,\p)=\emptyset$.
By the definition (Definition~\ref{kappa}) of $\kappa(\cdot)$,
there exists $(\Y,\Z)\in M(\W')$
such that\footnote{%
We can write $\Y$ and $\Z(L_\alpha)$ in the form below (footnote~\ref{note:indexed-collection}).}
\\
(a)~$\Y = \{L_\alpha: \alpha<\kappa'\}\subseteq \W'$, where $\kappa':=\#\Y$:\\
(b)~$\Z$ maps each $L_\alpha\in\Y$ to $\Z(L_\alpha) = \{L_\alpha^\beta: \beta<\beta(\alpha)\}\subseteq \B$
(where $\beta(\alpha):=\#\Z(L_\alpha)$) satisfying
$L'_\alpha:=\bigcup\Z(L_\alpha) = \bigcup_{\beta<\beta(\alpha)}L_\alpha^\beta \supseteq L_\alpha\in \W'$
and $\bigcap_{L_\alpha\in\Y} \bigcup\Z(L_\alpha) = \bigcap_{\alpha<\kappa'} L'_\alpha = \emptyset$;\\
(c)~$\kappa=\max\{\kappa', \sup\{\beta(\alpha): \alpha<\kappa' \}\} \le \#B$.

Write $B=\{x_\alpha: \alpha<\#B\}$ and let $B'=\{x_\alpha: \alpha<\kappa'\}$.

\medskip

\emph{Case}: $\kappa$ is finite. 
We construct a profile $\p$ such that the dominance relation~$\dom$ has a cycle consisting of
$\kappa$ alternatives.
Since $\beta(\alpha)\le \kappa$ is finite for all $\alpha<\kappa'$ in this case, 
$L'_\alpha=\bigcup_{\beta} L_\alpha^\beta$ is a finite union of elements of the Boolean algebra~$\B$.
So we can assume $L'_\alpha \in \B$ and $\beta(\alpha)=1$ without loss of generality.
By~(c), we have $\kappa'=\kappa$.
Write $x_\kappa=x_0$ and fix a relation $\succ'=\{(x_0,y): y\notin B' \}$.
Let 
\[
\pref=\{(x_{\alpha+1},x_\alpha): L'_\alpha\ni i\} \,\cup \succ'
\]
 for all $i\in N$.
The profile~$\p$ is the same as that in the proof of Theorem~\ref{nakamura-max},
except that $L_\alpha$ is replaced by $L'_\alpha\in\B$.
The rest of the proof runs as before.

\medskip

\emph{Case}: $\kappa$ is infinite.
We construct a profile $\p$ such that for each alternative $x_\alpha\in B$, there is a winning set of
players~$i$ who prefer another alternative $x_{\alpha+\beta_i}\in B$ to $x_\alpha$.\footnote{%
Though not required in our setting, we construct the profile so that individual preferences will be
acyclic.  This is achieved by the following:
no player prefers an alternative $x_{\alpha'}\in B$ with smaller index to $x_\alpha$
(if $\alpha'\le \alpha$, then $x_{\alpha'}\not\pref x_\alpha$).}
Since preferences involving alternatives outside~$B$ are irrelevant,
 we construct the profile in such a way that it satisfies $\pref\subseteq B\times B$ for all~$i$.

We define $\p$ by specifying $\xprefy$ for each pair $(x,y)=(x_{\alpha'},x_\alpha)\in B\times B$
 satisfying $\alpha'>\alpha$.  
(If $(x,y)$ does not satisfy the conditions concerning the indices, then $\xprefy=\emptyset$.)
Note that each such pair can be uniquely written as $(x,y)=(x_{\alpha+\beta},x_\alpha)$
for some $\beta>0$.\footnote{%
$\beta$~is the unique ordinal isomorphic to $\{\gamma: \alpha\le \gamma< \alpha'\}$.}
Let $L_\alpha^1=\emptyset$ if $\beta(\alpha)=1$; otherwise, we can assume $\beta(\alpha)$ is infinite.
Define $\p$ by the following, which immediately establishes that $\p$~is measurable:\footnote{
See Appendix~\ref{appendix-figs} for a graph representation of the profile.
For the subsequent argument, we need to verify that $x_{\alpha+\beta}$ and~$x_\alpha$ 
corresponding to the first two cases do exist in~$B$, that is, 
$\alpha+\beta<\#B$ (and $\alpha<\#B$, which is obvious).
For the first and the third cases, since $\alpha<\#B$, we have $\alpha+\beta=\alpha+1<\#B$.
This is because $\#B$, being an infinite cardinal number,
 is a limit ordinal by Lemma~\ref{lem:card-lim}.
For the second case, since $\alpha<\kappa'\le \#B$, we have $\#\alpha< \#B$.
This is because $\#B$, being a cardinal number,
is not equipotent to $\alpha<\#B$.
Similarly, since $\beta<\beta(\alpha)\le \#B$, we have $\#\beta< \#B$.
If $\alpha$ and $\beta$ are finite, the conclusion is obvious since $\#B$ is infinite in this case.
Suppose otherwise.
Then, by Lemma~\ref{lem:sumsum}, we have 
$\#(\alpha+\beta)=\#\alpha+\#\beta= \max\{\#\alpha,\#\beta\}<\#B=\#(\#B)$.
This implies that there is no one-to-one function from $\#B$ to $\alpha+\beta$.
Therefore, $\alpha+\beta<\#B$.} 
\[
\{i:x_{\alpha+\beta}\pref x_\alpha\} =
	\left\{ \begin{array}{ll}
	L_\alpha^0\cup L_\alpha^1 \in\B & \mbox{if $\alpha<\kappa'$ and $\beta=1$},\\
	L_\alpha^\beta \in \B  & \mbox{if $\alpha<\kappa'$ and $1<\beta<\beta(\alpha)$}, \\
	N   \in \B   & \mbox{if $\kappa' \le \alpha<\#B$ and $\beta=1$}, \\
	\emptyset  \in \B & \mbox{otherwise}.
	\end{array}
	\right.
\]

For each $x_\alpha\in B$, we have 
$\{i: x_\alpha\notin \max_B\pref\}=\bigcup_\beta \{i: x_{\alpha+\beta}\pref x_\alpha\}$,
which equals $\bigcup_{\beta<\beta(\alpha)} L_\alpha^\beta=L'_\alpha$ (if $\alpha<\kappa'$;
the equality holds whether $\beta(\alpha)$ is $1$ or infinite) 
or $N$ (otherwise).
In either case, the set contains a winning set $L_\alpha\in\W'$.
Therefore, $x_\alpha\notin C^+(\W',B,\p)$.  This establishes that $C^+(\W',B,\p)=\emptyset$. 

To establish that $\p\in\profs$, we need to show that each~$i$'s preference~$\pref$
has a maximal element of~$B$.
Since $\bigcap_{\alpha<\kappa'} L'_\alpha = \emptyset$, we have
$\bigcup_{\alpha<\kappa'} (L'_\alpha)^c = N$.
So, for each $i\in N$, there is an $\alpha<\kappa'$
such that $i\notin L'_\alpha$.
Since $i\notin L_\alpha^\beta$ for any $\beta<\beta(\alpha)$,
we have $x_\alpha\in \max_B\pref$ by the definition of~$\p$.

\medskip

(ii)$\Rightarrow$(iii).  Immediate from Lemma~\ref{bet-c+c}.

\medskip

(iii)$\not\Rightarrow$(i).  Consider Example~\ref{nu-kappa-nu2}.
We first prove that $C(\W',B,\p)=C(\W,B,\p)$ for all $\p\in\profs$ and all $B\subseteq X$.
By Lemma~\ref{bet-c+c}, it suffices to show that $C(\W,B,\p)\subseteq C(\W',B,\p)$.
Suppose $x \in B$ is not in $C(\W',B,\p)$.
Then, there are $y \in B$ and $S \in \W'$ such that
$S\subseteq \yprefx$.
Since $\p$ is measurable, $\yprefx\in\B$ is either finite or cofinite.
If it is finite, then $S\in \W'$~is finite, 
a contradiction.
It follows that $\yprefx$ is cofinite, hence it belongs to~$\W$.
So $x\notin C(\W,B,\p)$.

Choose an agenda $B$ satisfying $\#B=\omega$.
(i) is violated since $\#B=\omega\not< \omega=\kappa(\W')$.
On the other hand, since $\#B=\omega< 2^\omega=\nu(\W)$,
Theorem~\ref{nakamura-max} implies that $C(\W,B,\p)\ne \emptyset$ for all $\p\in \profs$.
Then (iii) is satisfied since $C(\W',B,\p)=C(\W,B,\p)$.\end{proof}


\pagebreak

\setcounter{page}{1}

\appendix

\section{Appendix: Supplementary Material}

This is supplementary material for 
``Preference aggregation theory without acyclicity: The core without majority dissatisfaction''
by Masahiro Kumabe and H.  Reiju Mihara.

\subsection{Proof of Lemma~\ref{lem:card-lim}} \label{card-lim}

We show that an infinite cardinal number is a limit ordinal.

Suppose $\alpha$ is an infinite cardinal number that is not a limit ordinal.
Then, being a successor ordinal, $\alpha=\beta+1=\beta\cup\{\beta\}$ for some ordinal $\beta$.
Clearly, $\beta<\alpha$.
Since $\alpha$ is initial, there is no bijection between $\alpha=\beta\cup\{\beta\}$ and $\beta$.
But we can construct such a bijection $f$ as follows:
$f(\gamma)=0$ if $\gamma=\beta$; $f(\gamma)=\gamma+1$ if $\gamma\in\omega$; 
$f(\gamma)=\gamma$ if $\gamma\in\beta$ but $\gamma\notin\omega$.

\subsection{Proof of Lemma~\ref{lem:sumsum}} \label{sumsum}

We show that $\#(\alpha+\beta)=\#\alpha+\#\beta$, where the sum on the left side is the ordinal sum and
the sum on the right is the cardinal sum.

Pick disjoint well-ordered sets $(A,<_A)$ and $(B,<_B)$
isomorphic to ordinals $\alpha$ and $\beta$, respectively.
Then, we have $\#(\alpha+\beta)=\#(A\cup B)$ \citep[Theorem~5.5, page~152]{hrbacek-j84}.
This is equal to $\#\alpha+\#\beta$ by the definition of the cardinal sum.

\subsection{The inclusion of Lemma~\ref{c+in_maximals} is strict for some profile} \label{ex:core3}

We show that there is a profile $\p$ such that there is an alternative not in
$C^+(\W,B,\p)$ but in $C(\W,B,\p)\cap(\bigcup_i \max_B \pref)$.

We ``replicate'' sufficiently many times an example for which $C^+$~is a proper subset of~$C$, and then
 add an ``insignificant'' player whose maximal alternative contains an alternative belonging to the difference $C\setminus C^+$.
We build on Example~\ref{ex:core1} here.
 Let $N=\{1,1',2,2',3,3',4\}$ and let $\W$ consist of the coalitions having a majority.
Let $X=\{a,b,c,d,e\}$.  Define a profile by 
$\succ_1=\succ_{1'}=\{(a,d),(e,b),(e,c)\}$,
$\succ_2=\succ_{2'}=\{(b,d),(e,a),(e,c)\}$,
$\succ_3=\succ_{3'}=\{(c,d),(e,a),(e,b)\}$, and
$\succ_4=\emptyset$ (or any preference such that $d$ is a maximal).
Then, as before, the core $C$ is $\{d, e\}$ and the core $C^+$ without majority dissatisfaction is $\{e\}$.
So $C^+$ is a proper subset of $C\cap(\bigcup_i \max\succ_i)=\{d,e\}$.

\subsection{The complement of a maximal coinfinite set contains no infinite r.e.\ sets}
 \label{maximal_compl_no_re}

Let $T$ be a maximal coinfinite set.  We show that $T^c$  contains no infinite r.e.\ subsets.

Let $S'\subseteq T^c$ be an infinite r.e.\ set.
Then $S = S' \cup T$ is an r.e.\ set containing $T$.
By the definition of maximal coinfinite sets, either $S\setminus T$ is finite or $S$ is cofinite.
But the first case cannot occur since $S \setminus T = S'$ is infinite.
Since $S$ is cofinite, we have $S=S' \cup T = F^c$ for some finite~$F$.
It follows that $T^c = S' \cup F$.
So $T^c$ is r.e., implying that $T$ is recursive.
This contradicts the fact that $T$~is maximal coinfinite.

\subsection{Figures}
\label{appendix-figs}

\subsubsection{Example~\ref{ex:core1}}

\begin{center}
\includegraphics[width=5cm,clip]{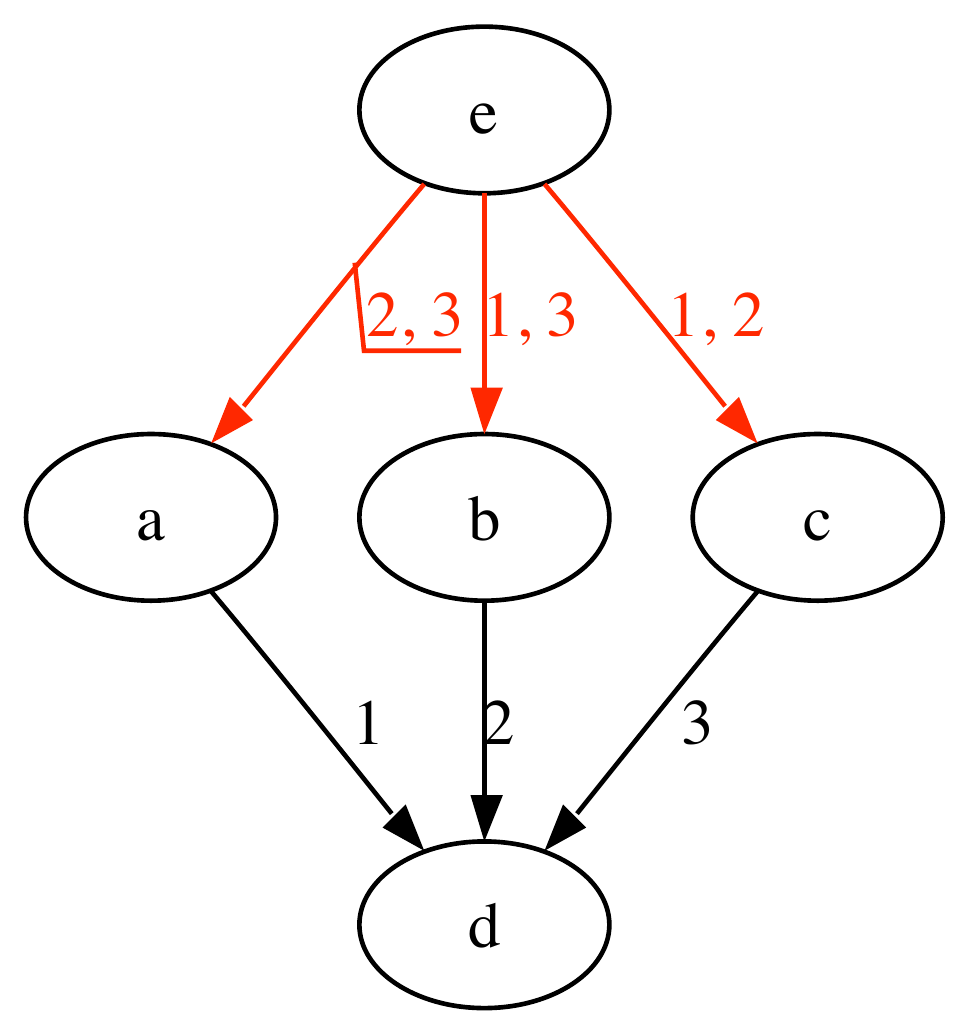}
\end{center}

\subsubsection{Example~\ref{ex:core2}}

\begin{center}
\includegraphics[width=5cm,clip]{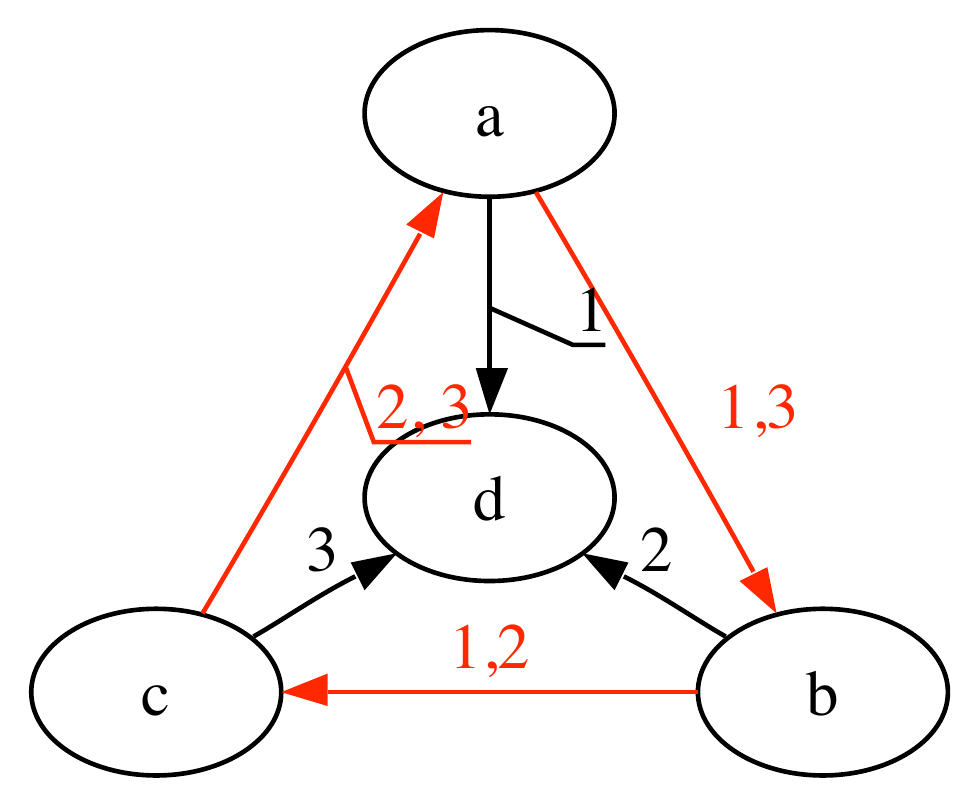}
\end{center}

\pagebreak

\subsubsection{Profile for the proof (ii)$\Rightarrow$(i) of Theorem~\ref{nakamura-max-extended}, 
when $\kappa$ is infinite}

\begin{center}
\includegraphics[width=15cm,clip]{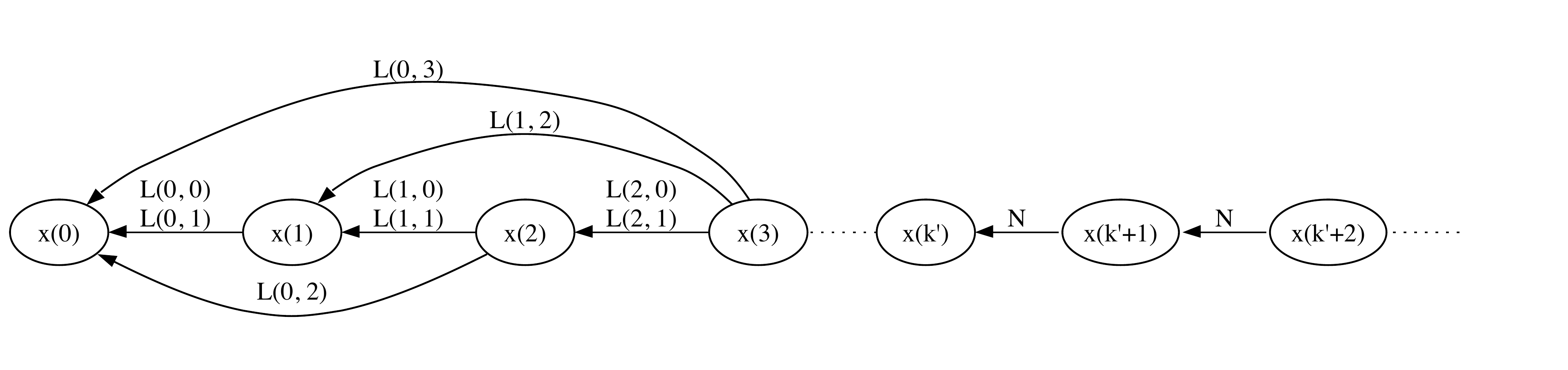}
\end{center}
For each $\alpha,\beta<\# B$, x($\alpha$) denotes $x_\alpha$,
L($\alpha,\beta$) denotes $L_\alpha^\beta$, and 
k' denotes~$\kappa'$.


\begin{thebibliography}{29}
\expandafter\ifx\csname natexlab\endcsname\relax\def\natexlab#1{#1}\fi
\expandafter\ifx\csname url\endcsname\relax
  \def\url#1{\texttt{#1}}\fi
\expandafter\ifx\csname urlprefix\endcsname\relax\def\urlprefix{URL }\fi

\bibitem[{Ambrus and Rozen(2008)}]{ambrus-r0807}
Ambrus, A., Rozen, K., Jul. 2008. Revealed conflicting preferences. Discussion
  Paper 1670, Cowles Foundation, Yale University, New Haven.

\bibitem[{Andjiga and Mbih(2000)}]{andjiga-m00}
Andjiga, N.~G., Mbih, B., 2000. A note on the core of voting games. Journal of
  Mathematical Economics 33, 367--372.
  
 \bibitem[{Andjiga and Moulen(1989)}]{andjiga-m89}
Andjiga, N.~G., Moulen, J., 1989. Necessary and sufficient conditions for
  $l$-stability of games in constitutional form. International Journal of Game
  Theory 18, 91--110.

\bibitem[{Andjiga and Moyouwou(2006)}]{andjiga-m06}
Andjiga, N.~G., Moyouwou, I., 2006. A note on the non-emptiness of the
  stability set when individual preferences are weak orders. Mathematical
  Social Sciences 52, 67--76.

\bibitem[{Arrow(1963)}]{arrow63}
Arrow, K.~J., 1963. Social Choice and Individual Values, 2nd Edition. Yale
  University Press, New Haven.

\bibitem[{Austen-Smith and Banks(1999)}]{austensmith-b99}
Austen-Smith, D., Banks, J.~S., 1999. Positive Political Theory I: Collective
  Preference. University of Michigan Press, Ann Arbor.

\bibitem[{Banks(1985)}]{banks85}
Banks, J.~S., 1985. Sophisticated voting outcomes and agenda control. Social
  Choice and Welfare 1, 295--306.

\bibitem[{Banks(1995)}]{banks95}
Banks, J.~S., 1995. Acyclic social choice from finite sets. Social Choice and
  Welfare 12, 293--310.

\bibitem[{Banks et~al.(2006)Banks, Duggan, and Le~Breton}]{banks-dl06}
Banks, J.~S., Duggan, J., Le~Breton, M., 2006. Social choice and electoral
  competition in the general spatial model. Journal of Economic Theory 126,
  194--234.

\bibitem[{Dasgupta and Maskin(2008)}]{dasgupta-m08}
Dasgupta, P., Maskin, E., 2008. On the robustness of majority rule. Journal of
  the European Economic Association 6, 949--973.

\bibitem[{Duggan(2007)}]{duggan07}
Duggan, J., 2007. A systematic approach to the construction of non-empty choice
  sets. Social Choice and Welfare 28, 491--506.

\bibitem[{Fishburn(1970)}]{fishburn70}
Fishburn, P.~C., 1970. Arrow's {I}mpossibility {T}heorem: Concise proof and
  infinite voters. Journal of Economic Theory 2, 103--6.

\bibitem[{Friedberg(1958)}]{friedberg58}
Friedberg, R.~M., 1958. Three theorems on recursive enumeration. {I}.
  decomposition, {II.} maximal set, {III.} enumeration without duplication.
  Journal of Symbolic Logic 23, 309--316.

\bibitem[{Hrbacek and Jech(1984)}]{hrbacek-j84}
Hrbacek, K., Jech, T., 1984. Introduction to Set Theory, second, revised and
  expanded edition. Marcel Dekker, New York.

\bibitem[{Kalai et~al.(2002)Kalai, Rubinstein, and Spiegler}]{kalai-rs02}
Kalai, G., Rubinstein, A., Spiegler, R., 2002. Rationalizing choice functions
  by multiple rationales. Econometrica 70, 2481--2488.

\bibitem[{Kreps(1979)}]{kreps79}
Kreps, D.~M., 1979. A representation theorem for ``preference for
  flexibility''. Econometrica 47, 565--577.

\bibitem[{Kumabe and Mihara(2008{\natexlab{a}})}]{kumabe-m08jme}
Kumabe, M., Mihara, H.~R., 2008{\natexlab{a}}. Computability of simple games: A
  characterization and application to the core. Journal of Mathematical
  Economics 44, 348--366.

\bibitem[{Kumabe and Mihara(2008{\natexlab{b}})}]{kumabe-m08scw}
Kumabe, M., Mihara, H.~R., 2008{\natexlab{b}}. The {N}akamura numbers for
  computable simple games. Social Choice and Welfare 31, 621--640.

\bibitem[{Le~Breton and Salles(1990)}]{lebreton-s90}
Le~Breton, M., Salles, M., 1990. The stability set of voting games:
  Classification and genericity results. International Journal of Game Theory
  19, 111--127.

\bibitem[{Lipman(1991)}]{lipman91}
Lipman, B.~L., 1991. How to decide how to decide how to \dots: Modeling limited
  rationality. Econometrica 59, 1105--1125.

\bibitem[{Martin and Merlin(2006)}]{martin-m06}
Martin, M., Merlin, V., 2006. On the characteristic numbers of voting games.
  International Game Theory Review 8, 643--654.

\bibitem[{Mihara(1997)}]{mihara97et}
Mihara, H.~R., 1997. {A}rrow's {T}heorem and {T}uring computability. Economic
  Theory 10, 257--76.

\bibitem[{Mihara(1999)}]{mihara99jme}
Mihara, H.~R., 1999. {A}rrow's theorem, countably many agents, and more visible
  invisible dictators. Journal of Mathematical Economics 32, 267--287.
  
\bibitem[{Mihara(2000)}]{mihara00scw}
Mihara, H.~R., 2000. Coalitionally strategyproof functions depend only on the
  most-preferred alternatives. Social Choice and Welfare 17, 393--402.

\bibitem[{Nakamura(1979)}]{nakamura79}
Nakamura, K., 1979. The vetoers in a simple game with ordinal preferences.
  International Journal of Game Theory 8, 55--61.

\bibitem[{Odifreddi(1992)}]{odifreddi92}
Odifreddi, P., 1992. Classical Recursion Theory: The Theory of Functions and
  Sets of Natural Numbers. Elsevier, Amsterdam.

\bibitem[{Penn(2006)}]{penn06scw27p531}
Penn, E.~M., 2006. The Banks set in infinite spaces. Social Choice and Welfare
  27, 531--543.

\bibitem[{Royden(1988)}]{royden88}
Royden, H.~L., 1988. Real Analysis, 3rd Edition. Macmillan, New York.

\bibitem[{Rubinstein(1980)}]{rubinstein80}
Rubinstein, A., 1980. Stability of decision systems under majority rule.
  Journal of Economic Theory 23, 150--159.

\bibitem[{Truchon(1995)}]{truchon95}
Truchon, M., 1995. Voting games and acyclic collective choice rules.
  Mathematical Social Sciences 29, 165--179.

\end{thebibliography}
\end{document}